\definecolor{nblue}{rgb}{0.3,0.3,1.0}%229
\definecolor{ngreen}{rgb}{0.2,0.7,0.2}%161
\definecolor{nred}{rgb}{0.9,0.1,0}%711&900
\definecolor{nblack}{rgb}{0,0,0}
\definecolor{nyellow}{rgb}{1.0,0.75,0.0}
\newcommand{\beq}{\begin{equation}}
\newcommand{\eeq}{\end{equation}}
\newcommand{\bqa}{\begin{eqnarray}}
\newcommand{\eqa}{\end{eqnarray}}
\newtheorem{proposition}{Proposition}
\newtheorem{definition}{Definition}
\newtheorem{lemma}{Lemma}
\newtheorem{result}{Result}
\newtheorem{remark}{Remark}
\crefname{lemma}{Lemma}{Lemmas}
\newcommand{\jt}[1]{{\color{ngreen} #1}}
\begin{document}
	
\title{Necessary and Sufficient Condition for Randomness Certification from Incompatibility}
\author{Yi Li}
\affiliation{State Key Laboratory for Mesoscopic Physics, School of Physics, Frontiers Science Center for Nano-optoelectronics, $\&$ Collaborative Innovation Center of Quantum Matter, Peking University, Beijing 100871, China}
\address{Beijing Academy of Quantum Information Sciences, Beijing 100193, China}
\address{Naturwissenschaftlich-Technische Fakult\"{a}t, Universit\"{a}t Siegen, Walter-Flex-Stra{\ss}e 3, 57068 Siegen, Germany}
\author{Yu Xiang}
\affiliation{State Key Laboratory for Mesoscopic Physics, School of Physics, Frontiers Science Center for Nano-optoelectronics, $\&$ Collaborative Innovation Center of Quantum Matter, Peking University, Beijing 100871, China}
\author{Jordi Tura}
\affiliation{$\langle aQa^L\rangle$ Applied Quantum Algorithms Leiden, The Netherlands}
\affiliation{Instituut-Lorentz, Universiteit Leiden, P.O. Box 9506, 2300 RA Leiden, The Netherlands}
\author{Qiongyi He}
\email{qiongyihe@pku.edu.cn}
\affiliation{State Key Laboratory for Mesoscopic Physics, School of Physics, Frontiers Science Center for Nano-optoelectronics, $\&$ Collaborative Innovation Center of Quantum Matter, Peking University, Beijing 100871, China}
\affiliation{Collaborative Innovation Center of Extreme Optics, Shanxi University, Taiyuan, Shanxi 030006, China}
\affiliation{Peking University Yangtze Delta Institute of Optoelectronics, Nantong, Jiangsu 226010, China}	
\affiliation{Hefei National Laboratory, Hefei 230088, China}

\begin{abstract}
Quantum randomness can be certified from probabilistic behaviors demonstrating Bell nonlocality or Einstein-Podolsky-Rosen steering, leveraging outcomes from uncharacterized devices. 
However, such nonlocal correlations are not always sufficient for this task, necessitating the identification of required minimum quantum resources. In this work, we provide the necessary and sufficient condition for nonzero certifiable randomness in terms of measurement incompatibility and develop approaches to detect them. Firstly, we show that the steering-based randomness can be certified if and only if the correlations arise from a measurement compatibility structure that is not isomorphic to a hypergraph containing a star subgraph. 
In such a structure, the central measurement is individually compatible with the measurements at branch sites, precluding certifiable randomness in the central measurement outcomes. Subsequently, we generalize this result to the Bell scenario, proving that the violation of any chain inequality involving $m$ inputs and $d$ outputs rules out such a compatibility structure, thereby validating all chain inequalities as credible witnesses for randomness certification.
Our results point out the role of incompatibility structure in generating random numbers, offering a way to identify minimum quantum resources for the task.
\end{abstract}

\maketitle
\textit{Introduction.---} Quantum random numbers, which can be generated by performing a projective measurement on an appropriate quantum state, have been widely studied in fields ranging from quantum cryptography to fundamental science~\cite{rmp_2017_random,njp_2015_xiongfengma}. However, these protocols necessitate a perfect characterization of both the state and measurement, e.g., small alignment errors can cause an erroneous certification of entangled states~\cite{pra_2012_gisin}. 
To bypass the necessity of modeling both factors, the nonlocality-based randomness certification protocol is proposed through the violation of a Bell inequality~\cite{nature_2010_random}, where each party performs untrusted measurements~\cite{rmp_2014_nonlocality}. Considering the robustness to noise and experimental imperfections, this method is generalized to the scenario with the exhibition of Einstein-Podolsky-Rosen (EPR) steering~\cite{njp_2015_paul},
since it
only relies on the trustworthy measurements on one-side device~\cite{rmp_2020_steering,prl_2007_wiseman}.

\begin{figure}[t]
\centering
\includegraphics[width=0.43\textwidth]{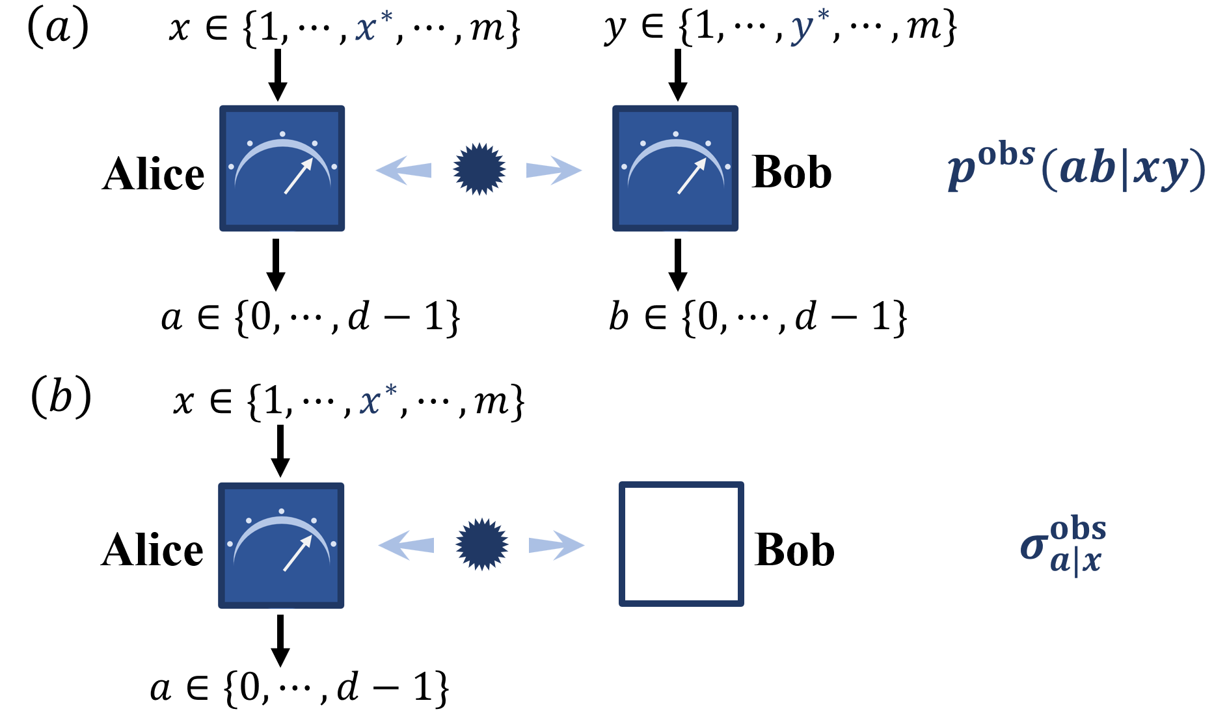}
\vspace{-0.1cm}
\caption{Randomness certification based on quantum nonlocality. (a) The Bell scenario, involves both Alice and Bob receiving $m$ inputs and producing $d$ outputs, where each of them implements untrusted local measurements. Then, their behavior is described by the joint probability distributions $ \{ p^{\rm obs}(ab|xy)\}$. (b) The steering scenario, which is similar to (a) but Bob performs trusted measurements, such that he holds the information of assemblage $\{\sigma_{a|x}^{\rm obs}\}$. In these scenarios, randomness on the outcomes of $x^*$ and $y^*$ can be certified based on the observed statistics. By constructing an incompatibility structure of the $m$ inputs, we clarify the necessary and sufficient condition for this task. Further, we prove that any chain inequality is a credible witness for randomness certification.}
\label{fig:scheme}
\vspace{-0.4cm}
\end{figure}

In both cases, each untrusted party, ``black box'' in Fig.~\ref{fig:scheme}, is required to perform a set of incompatible measurements, otherwise, the resulting statistics, i.e., the input-output data given by measurement devices, cannot express Bell nonlocality or EPR steering~\cite{prl_2014_brunner,prl_2014_otfried,prl_2015_otfried}. 
Here, the measurement incompatibility represents the inability to perform measurements simultaneously~\cite{rmp_2023_otfried}, which has shown advantages in the task of quantum state discrimination~\cite{prl_2019_paul, prl_2019_roope,prl_2020_eric,quantum_2019_oszmaniec,arXiv_2023_peizhang,prl_2019_carmeli}. Intuitively, if a set of measurements is incompatible, then not all measurement outcomes are predictable, indicating that these outcomes are random. 
Therefore, Bell nonlocality and EPR steering enable to witness measurement incompatibility~\cite{prl_2019_quintino,prr_2021_shiliangchen,prl_2009_wolf,prl_2016_shinliangchen}, allowing the certification of randomness in measurement outcomes without making assumptions about the internal mechanisms of the untrusted devices~\cite{np_2021_ustc,nphy_2021_random,prl_2018_christian,prl_2018_ustc,prl_2021_ustc,qst_2017_paul,science_2018_jianweiwang,prl_2019_shumingcheng,nature_2010_random,pra_2022_exp}.

In these approaches, randomness can be certified based on the observed statistics. 
Some previous works~\cite{nature_2010_random,prl_2012_acin,njp_2015_paul,njp_2014_the,quantum_2018_multipartite,pra_2013_multipartite,pra_2022_the,prl_2019_shumingcheng,pra_2022_exp,arXiv_2024_USTC,PRL_2024_yi} showed that the observation of nonlocal correlations is sufficient to certify randomness when the untrusted party  implements measurements chosen from only two different settings. 
However, in cases involving more measurement settings where a compatibility structure can generally be described by a hypergraph~\cite{prl_2019_quintino}, not all observed statistics that express EPR steering or Bell nonlocality contribute to certifying randomness~\cite{PRL_2024_yi,arXiv_2024_USTC,arXiv_2024_stefano}. For example, there exist some Bell inequalities that cannot certify randomness even when they are maximally violated~\cite{arXiv_2024_stefano}. 
Meanwhile, some pure states with very weak entanglement can still be used to certify maximal randomness~\cite{prl_2012_acin,prl_2018_paul}. 
Therefore, three questions arise: What are the necessary and sufficient conditions for observed statistics to certify randomness? 
For which Bell inequalities can randomness still be certified when they are only slightly violated? How to determine whether a quantum state itself is useful for certifying randomness in these approaches? Answering these questions is important for fully leveraging quantum resources in the task of randomness certification.

In this work, we find that randomness can be certified in the steering scenario iff (if and only if) the correlations arise from a measurement compatibility structure that is not isomorphic to a hypergraph containing a star subgraph. In such a structure as shown in Fig.~\ref{fig:structure}~(a), the central measurement is individually compatible with the measurements at branch sites, precluding certifiable randomness in the central measurement outcomes. 
Further, we extend this finding to the Bell scenario where randomness can be certified iff at least one of Alice's or Bob's measurement compatibility structures does not contain a star subgraph. 
As a result, we prove that the violation of any chain inequality~\cite{prl_2002_cglmp,prl_2017_satwap,arXiv_2022_mengyaohu,prl_2006_chain} with $m$ inputs and $d$ outputs rules out such a compatibility structure, thereby all chain inequalities can act as truthful witnesses for randomness certification. 
Based on these results, we assert that a state useful for certifying randomness is determined by whether it can express steerability when restricted to two measurement settings.

\textit{Nonlocality-based randomness certification---} In Fig.~\ref{fig:scheme}~(a), Alice and Bob are situated separately and share a quantum state $\rho_{AB}$. Both of them accept $m$ inputs, labeled by $x,y\in\left\{1,\cdots,m\right\}$, and produce $d$ outputs, labeled by $a,b\in[d]$, where $[d] \coloneqq \left\{ 0,\cdots,d-1 \right\}$. A set of resulting joint probability distributions $\{p^{\rm obs} (ab|xy)\}$ exhibiting Bell nonlocality cannot be
described by local-hidden variable (LHV) model:
$\sum_{\lambda }p(\lambda ) p_A(a|x,\lambda)p_B(b|y,\lambda)$, $\forall a,b,x,y$, 
where $\lambda$ are the hidden variables, and $p(\lambda)$, $p_A(a|x,\lambda)$, $p_B(b|y,\lambda)$ are probability distributions. Based on this, randomness of outcomes of measurements $x^*$ and $y^*$ can be certified by considering an eavesdropper (Eve) who attempts to guess both outcomes with a guessing probability of~\cite{njp_2014_the} %\jt{both, instead of these}
\begin{equation}
    \begin{aligned}
    \label{eq:definition_nl_rand}
P_{\text {guess }}^{\rm NL}\left(x^*,y^*\right)& \coloneqq \max _{\left\{p^{ee'}(ab\mid xy)\right\}} \sum_{ee'} p^{ee'}\left(a=e,b=e' \mid x^*,y^*\right)\\
\text { s.t. } & ~~\sum_{ee'} p^{ee'}(ab|xy)=p^{\text{obs}}(ab|xy), \quad \forall a, b, x, y,\\
& ~~p^{ee'}(ab|xy)\in Q, \quad  \forall e, e'.
\end{aligned}
\end{equation}
Here $Q$ represents the quantum set~\cite{rmp_2014_nonlocality}, and $ \{ p^{ee'}(ab|xy) \} $ denotes a decomposition of the observed statistics, where $e,e' \in [d]$ are guesses from Eve. Thus, randomness of these outcomes is quantified by the min-entropy $H_{\min}^{\rm NL}\left(x^*,y^*\right) \coloneqq -\log_2P_{\text{guess}}^{\rm NL}\left(x^*,y^*\right)$. Additionally, by simply modifying the first constraint in Eq.~\eqref{eq:definition_nl_rand}, randomness can also be certified based on the violation of a Bell inequality~\cite{science_2018_jianweiwang,nature_2010_random}. 
Clearly, since a behavior satisfying the LHV model can be rewritten as a mixture of deterministic behaviors~\cite{prl_1982_fine}, this behavior gives $\min_{x^*,y^*}P_g^{\rm NL}(x^*,y^*) = 1$, hence, zero nonlocality-based randomness.

\textit{Steering-based randomness certification---} For the analogous scenario shown in Fig.~\ref{fig:scheme}~(b), Bob's measurement device is now trusted but Alice's remains untrusted. This allows Bob to perform tomographic measurements and acquire knowledge of a set of conditional states $\{\sigma_{a|x}^{\text{obs}}\}$, which is also called assemblage. Based on this observed assemblage one can certify randomness on the outcomes of measurement $x^*$ by means of a semidefinite program~\cite{njp_2015_paul}:
\begin{equation}
    \begin{aligned}
    \label{eq:definition_steer_rand}
P_{\text {guess }}^{\rm S}\left(x^*\right)& \coloneqq \max _{\left\{\sigma_{a \mid x}^e\right\}} \sum_e \operatorname{Tr}\left(\sigma_{a=e \mid x^*}^e\right) \\
\text { s.t. } & ~~\sum_e \sigma_{a \mid x}^e=\sigma_{a \mid x}^{\text {obs }}, ~ \forall a, x, \\
& ~~\sum_a \sigma_{a \mid x}^e=\sum_a \sigma_{a \mid x^{\prime}}^e,~ \forall e, x \neq x^{\prime}, \\
& ~~ \sigma_{a \mid x}^e \succeq 0,~ \forall a, x, e,
\end{aligned}
\end{equation}
where the no-signaling constraint is equivalent to the restriction of quantum realization in bipartite scenario~\cite{pr_1957_ghjw,hpa_1989_ghjw,pla_1993_ghjw,mpcps_1936_schro}. 
%Here the first constraint means Eve's strategy needs to be compatible with the observed assemblage. The second constraint is the no-signaling condition, which ensures that the bipartite assemblages admit quantum realization~\cite{mpcps_1936_schro,pla_1993_ghjw,pr_1957_ghjw,hpa_1989_ghjw}. 
Then, the steering-based randomness is quantified by $H_{\min}^{\rm S}(x^*) = -\log_2 P_{\text{guess}}^{\rm S}(x^*)$. An assemblage admitting the local-hidden state (LHS) model: $\sum_{\lambda}p_A(a|x,\lambda)\sigma_\lambda$, $\forall a,x$,
also gives $\min_{x^*}P_g^{\rm S}(x^*) = 1$, where the hidden states satisfy $\sigma_\lambda \succeq 0$ and $\sum_\lambda {\rm Tr}\left( \sigma_{\lambda} \right) = 1$.

\textit{Compatibility structure of measurements---} 
In these scenarios, we assume that each untrusted party performs generalized measurements, corresponding to the so-called positive operator valued measures (POVMs). These are defined by a collection of positive semi-definite matrices $\{M_{a}\}$ that satisfy the normalization condition $\sum_{a}M_{a} = \mathbb{I}$. In the following, we denote Alice's (Bob's) measurements as $\{M_{a|x}\}$ ($\{M_{b|y}\}$). 
In this context, a set of POVMs $\{M_{a|x}\}$ is considered compatible if there exists a parent POVM denoted by $\{G_\lambda\}$ together with a set of conditional probabilities $\left\{p(a|x,\lambda)\right\}$ such that~\cite{rmp_2023_otfried}
\begin{equation}
    M_{a|x} = \sum_\lambda p(a|x,\lambda)G_\lambda,~~~\forall a,x.
\end{equation}
Then, for a set containing more than two measurements, its compatibility structure is represented by a hypergraph 
denoted $\mathcal{C} \equiv \left[C_1,C_2,\cdots,C_k\right]$, where each vertex corresponds to a measurement, and each hyperedge $C_i$ indicates that the corresponding subset of measurements is compatible~\cite{prl_2019_quintino}. Illustrative examples are shown in Fig.~\ref{fig:structure}. In this work, we denote the compatibility structures of Alice's and Bob's measurements $\{M_{a|x}\}$ and $\{M_{b|y}\}$ as $\mathcal{C}_A$ and $\mathcal{C}_B$.
\begin{figure}[t]
\centering
\includegraphics[width=0.47\textwidth]{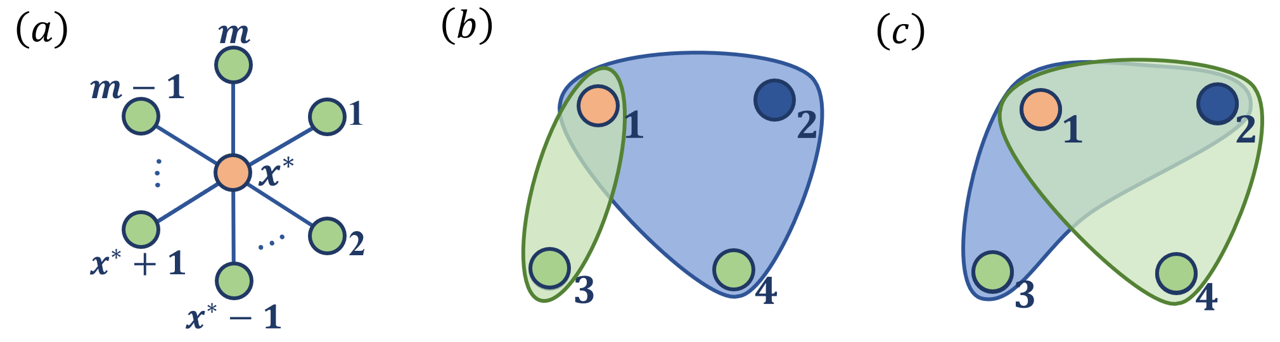}
\vspace{-0.3cm}
\caption{The compatibility structures of measurements. (a)~A star graph $K_{1,m-1}\left(x^*\right)$. This means the measurement $x^*$ is compatible with every other measurement. (b)~A hypergraph indicating two subsets of compatible measurements: $\left\{1, 2, 4\right\}$ and $\left\{1,3\right\}$. (c)~The set of measurements $\{1,2\}$, together with every other measurement, is compatible.}
\label{fig:structure}
\vspace{-0.3cm}
\end{figure}

In what follows we will show that the compatibility structure containing a particular star subgraph $K_{1,m-1}(x^*)$, as shown in Fig.~\ref{fig:structure}~(a), provides zero certifiable randomness.

\begin{proposition}\label{proposition 1}
\normalfont Given an assemblage $\{\sigma_{a|x}^{\rm obs}\}$, the guessing probability satisfies $P_g^{\rm S}(x^*) < 1$ iff it lies outside the set $\mathcal{R}^{\rm S}_{x^*}$ for any $x^*\in\{1,\cdots,m\}$, where $\mathcal{R}^{\rm S}_{x^*}$ is defined as:
\begin{equation}
    \left\{ \sigma_{a|x}: 
    \begin{array}{ll}
        &\sigma_{a|x} = {\rm Tr}_A\left[\left(M_{a|x}\otimes \mathbb{I}^B \right)\rho_{AB}\right], \mathcal{C}_{A} \supseteq K_{1,m-1} (x^*),\\
        &M_{a|x}\succeq 0, \sum_a M_{a|x} = \mathbb{I}^A, \rho_{AB}\succeq 0, {\rm Tr} (\rho_{AB}) = 1, 
    \end{array} \right\}.
\end{equation}
\end{proposition}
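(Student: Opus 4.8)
\emph{Proof strategy.} Fix any $x^*\in\{1,\dots,m\}$; we prove the equivalent statement $P_g^{\rm S}(x^*)=1\iff\{\sigma_{a|x}^{\rm obs}\}\in\mathcal{R}^{\rm S}_{x^*}$ (recall $P_g^{\rm S}(x^*)\le 1$). The first step is to rewrite the program in Eq.~\eqref{eq:definition_steer_rand}. Put $\rho_B\coloneqq\sum_a\sigma_{a|x}^{\rm obs}$, which is $x$-independent. For any feasible $\{\sigma_{a|x}^e\}$ the objective obeys $\sum_e\operatorname{Tr}(\sigma_{a=e|x^*}^e)\le\sum_e\operatorname{Tr}(\sum_a\sigma_{a|x^*}^e)=\operatorname{Tr}\rho_B=1$, and equality holds iff $\sigma_{a|x^*}^e=0$ for all $a\neq e$; since $\sum_e\sigma_{a|x^*}^e=\sigma_{a|x^*}^{\rm obs}$, this identifies $\sum_a\sigma_{a|x^*}^e=\sigma_{e|x^*}^{\rm obs}$, whence by the no-signaling constraint $\sum_a\sigma_{a|x}^e=\sigma_{e|x^*}^{\rm obs}$ for every $x$. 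The feasible set is compact, so the maximum is attained; conversely, given positive operators $\{\tau_{ae}^{(x)}\}_{a,e\in[d]}$ (one family per $x\neq x^*$) with $\sum_e\tau_{ae}^{(x)}=\sigma_{a|x}^{\rm obs}$ and $\sum_a\tau_{ae}^{(x)}=\sigma_{e|x^*}^{\rm obs}$, setting $\sigma_{a|x}^e=\tau_{ae}^{(x)}$ for $x\neq x^*$ and $\sigma_{a|x^*}^e=\delta_{ae}\,\sigma_{e|x^*}^{\rm obs}$ yields a feasible point of objective value $1$. Hence $P_g^{\rm S}(x^*)=1$ if and only if such ``joint assemblages'' $\{\tau_{ae}^{(x)}\}$ exist.

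It remains to see that joint assemblages exist if and only if $\{\sigma_{a|x}^{\rm obs}\}\in\mathcal{R}^{\rm S}_{x^*}$. For the ``if'' part, take a realization $\sigma_{a|x}^{\rm obs}=\operatorname{Tr}_A[(M_{a|x}\otimes\mathbb{I})\rho_{AB}]$ with $\mathcal{C}_A\supseteq K_{1,m-1}(x^*)$; pairwise compatibility of $x^*$ with each branch $x$ supplies a parent POVM $\{N_{ae}^{(x)}\}$ with $\sum_e N_{ae}^{(x)}=M_{a|x}$ and $\sum_a N_{ae}^{(x)}=M_{e|x^*}$, and propagating it through $\rho_{AB}$ gives $\tau_{ae}^{(x)}\coloneqq\operatorname{Tr}_A[(N_{ae}^{(x)}\otimes\mathbb{I})\rho_{AB}]$ with exactly the required marginals. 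For the ``only if'' part, I invoke the Gisin--Hughston--Jozsa--Wootters (GHJW) correspondence: fixing a purification $|\psi\rangle_{AB}$ of $\rho_B$ gives a linear, positivity-preserving map $g$ such that $g(\rho_B)$ is the identity on $\operatorname{supp}\rho_B$ and, extending off the support in a single $x$-independent way, the operators $M_{a|x}\coloneqq g(\sigma_{a|x}^{\rm obs})$ form valid POVMs realizing the assemblage on $|\psi\rangle_{AB}$. Applying the same $g$ to a joint assemblage, $N_{ae}^{(x)}\coloneqq g(\tau_{ae}^{(x)})$ is positive, sums over $(a,e)$ to $\mathbb{I}$, and coarse-grains to $M_{a|x}$ and to $M_{e|x^*}$ by linearity; thus $\{x,x^*\}\in\mathcal{C}_A$ for every $x\neq x^*$, i.e.\ $\mathcal{C}_A\supseteq K_{1,m-1}(x^*)$ and $\{\sigma_{a|x}^{\rm obs}\}\in\mathcal{R}^{\rm S}_{x^*}$. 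Chaining the two equivalences finishes the argument.

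The step I expect to be the main obstacle is the GHJW inversion when $\rho_B$ is rank-deficient: one must restrict to $\operatorname{supp}\rho_B$, check that the off-support completions of the $M_{a|x}$ and of the $N_{ae}^{(x)}$ can be fixed once and for all, independently of $x$, so that compatibility of $x^*$ with every branch is genuinely preserved, and confirm that $g$ sends $\rho_B$ to the identity on the support so that the lifted $N_{ae}^{(x)}$ are bona fide POVMs. The remaining items---the trace inequality forcing $\sigma_{a|x^*}^e=\delta_{ae}\sigma_{e|x^*}^{\rm obs}$, the $x$-independence of $\sum_a\sigma_{a|x}^e$, and verifying that the constructed feasible point attains value $1$---are routine bookkeeping.
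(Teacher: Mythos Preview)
Your proposal is correct and follows essentially the same two--step strategy as the paper: (i) identify $P_g^{\rm S}(x^*)=1$ with the existence, for every $x\neq x^*$, of a positive ``joint'' object whose marginals reproduce $\sigma_{a|x}^{\rm obs}$ and $\sigma_{e|x^*}^{\rm obs}$; (ii) pass between such joint objects and the star compatibility structure using the quantum realizability of no-signaling assemblages (GHJW).

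The differences are mainly in packaging. For the direction $\mathcal{R}^{\rm S}_{x^*}\Rightarrow P_g^{\rm S}(x^*)=1$, the paper expands each pairwise parent POVM into deterministic post-processings and builds hidden states $\tau_\mu^i$, then assembles the optimal decomposition $\xi_{a|x}^e$ explicitly; you bypass this by taking the joint POVM $N_{ae}^{(x)}$ directly and pushing it through $\rho_{AB}$ to get $\tau_{ae}^{(x)}$. These are the same object written in two bases. For the converse, the paper constructs a concrete realization with an auxiliary classical ``flag'' register $A'$ so that the $x^*$-measurement literally reads the flag; you instead use the GHJW inverse map $g$ on a single purification and lift both $\sigma_{a|x}^{\rm obs}$ and the joint assemblages $\tau_{ae}^{(x)}$ to POVMs, with the marginal relations preserved by linearity. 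Your route is a bit more economical (no flag system), while the paper's route makes the operational picture in Eq.~\eqref{eq:proposition1} manifest. The rank-deficient $\rho_B$ issue you flag is handled identically in spirit: restrict to $\operatorname{supp}\rho_B$, complete all lifted operators by a uniform share of the orthogonal projector, and note that this single $x$-independent completion is compatible with all the coarse-graining identities.
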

\textit{Sketch of Proof.} Firstly, for the assemblage inside the set $\mathcal{R}^{\rm S}_{x^*}$, since $\mathcal{C}_{A} \supseteq K_{1,m-1} (x^*)$, the assemblage lies in a ``$K_{1,m-1} (x^*)$-partially-unsteerable'' set, 
meaning that the ensemble $\{\sigma_{a|x}^{\rm obs}\}_{a,x\in\{x^*,x'\}}$ admits the LHS model for any edge $(x^*,x')$ in the graph $ K_{1,m-1}(x^*) $. According to this, we construct a solution $\left\{\xi_{a|x}^e\right\}$ to Eq.~\eqref{eq:definition_steer_rand} giving $P_{g}^{\rm S}(x^*) = 1$. 
Conversely, if an assemblage yields $P_g^{S}(x^*) = 1$, it can always be explained by a scenario where $\mathcal{C}_A \supseteq K_{1,m-1}(x^*)$~\cite{PRL_2024_yi}, which also implies that the ``$K_{1,m-1} (x^*)$-partially-unsteerable'' set and $\mathcal{R}_{x^*}^{\rm S}$ are the same. Therefore, all the assemblages outside $\mathcal{R}^{\rm S}_{x^*}$ give $P_g^{S}(x^*) < 1$, resulting in nonzero randomness. The detailed proof can be found in Appendix~A. 

In the operational interpretation, if an assemblage $\sigma_{a|x}^{\rm obs}\in\mathcal{R}^{\rm S}_{x^*}$, Proposition~\ref{proposition 1} implies an explanation of
\begin{equation}
    \begin{aligned}
    \label{eq:proposition1}
        \sigma_{a|x}^{\text{obs}} = \text{Tr}_{EA}\left[ \left( \mathcal{M}_{a|x}^{EA} \otimes \mathbb{I}^B\right)\rho_{EAB} \right],~~~\forall a,x,
    \end{aligned}
\end{equation}
where $\rho_{EAB} = \sum_{e} p(e)|e\rangle_E\langle e| \otimes \rho_{AB}^e$, $\mathcal{M}_{a|x^*}^{EA} = |a\rangle_E\langle a|\otimes \mathbb{I}^A$, and $\mathcal{M}_{a|x}^{EA} = \mathbb{I}^E\otimes \mathcal{M}_{a|x}^A$ for $x\neq x^*$. Here $\left\{|e\rangle\right\}_e$ are orthogonal states, $\{\rho_{AB}^e\}_e$ are normalized quantum states, and $\left\{\mathcal{M}_{a|x}^A\right\}$ are POVMs. 
This scenario considers the bipartition of $EA|B$, where the parties share the state $|e\rangle_E\langle e| \otimes \rho_{AB}^e$ with probability $p(e)$. For each $e$, the subsystem $EA$ always produces measurement outcome $e$ when the input is $x^*$. 

According to this, we then prove the following results. 

\begin{result}\label{result 2}
\normalfont For a given behavior $\{p^{\rm obs}(ab|xy)\}$, nonlocality-based randomness can be certified, i.e., $P_g^{\rm NL}(x^*,y^*) < 1$, iff the behavior lies outside the set $ \mathcal{R}^{\rm NL}_{x^*,y^*}\coloneqq\{ p(ab|xy): p(ab|xy)\in Q, \mathcal{C}_{A} \supseteq K_{1,m-1} (x^*), \mathcal{C}_{B} \supseteq K_{1,m-1} (y^*)\}$ for any $x^*,y^*\in\{1,\cdots,m\}$.
\end{result}

\begin{proof} 
Firstly, any behavior in the set $\mathcal{R}^{\rm NL}_{x^*,y^*}$ admits quantum realization, i.e., $p^{\rm obs} (ab|xy) = {\rm Tr}\left[\left(M_{a|x}\otimes M_{b|y} \right) \rho_{AB}\right]$. Since $\mathcal{C}_A \supseteq K_{1,m-1}\left(x^*\right)$, 
based on Eq.~\eqref{eq:proposition1}, the joint probability distributions being rewritten as
\begin{equation}
    \begin{aligned}
    \label{eq:jointpro}
        p^{\text{obs}}(ab|xy) = \text{Tr}\left[\left(\mathcal{M}_{a|x}^{EA} \otimes M_{b|y} \right) \rho_{EAB}\right],~~~\forall a,b,x,y.
    \end{aligned}
\end{equation}
Similarly, since $\mathcal{C}_B \supseteq K_{1,m-1}\left(y^*\right)$, the distributions can be described by
\begin{equation}
    \begin{aligned}
        p^{\text{obs}}(ab|xy) = \text{Tr}\left[ \left( \mathcal{M}_{a|x}^{EA} \otimes \mathcal{M}_{b|y}^{E'B} \right)\rho_{EE'AB} \right],~~~\forall a,b,x,y,
    \end{aligned}
\end{equation}
where $\rho_{EE'AB} = \sum_{e'} p(e')|e'\rangle_{E'}\langle e'| \otimes \rho_{EAB}^{e'}$, $\mathcal{M}_{b|y^*}^{E'B} = |b\rangle_{E'}\langle b|\otimes \mathbb{I}^{B}$ and $\mathcal{M}_{b|y}^{E'B} = \mathbb{I}^{E'}\otimes \mathcal{M}_{b|y}^B$ for $y \neq y^*$. Without loss of generality, Eve can hold the subsystems $EE'$, such that this underlying construction allows Eve to perfectly guess the outcomes of measurements $x^*$ and $y^*$, thereby providing $P_g^{\rm NL}(x^*,y^*) = 1$. Additionally, since a quantum behavior giving $P_g^{\rm NL}(x^*,y^*) = 1$ can always be produced in a scenario where $\mathcal{C}_A \supseteq K_{1,m-1}(x^*)$ and $\mathcal{C}_B \supseteq K_{1,m-1}(y^*)$ ~\cite{arXiv_2024_USTC}, then the behaviors outside the set $\mathcal{R}^{\rm NL}_{x^*,y^*}$ must give $P_g^{\rm NL}(x^*,y^*)<1$, which concludes the proof.
\end{proof}

According to Result~\ref{result 2}, not all Bell nonlocal behaviors that violate tight (or facet) Bell inequalities can certify randomness, as their violations do not necessarily exclude the specific structure of measurement compatibility. We then consider a class of chain inequalities~\cite{arXiv_2022_mengyaohu,prl_2006_chain,pra_2006_chain}:
\begin{equation}
    \begin{aligned}
        \label{eq:chain_manuscript}
        I_{d,m} = \sum_{k=0}^{d-1}\alpha_k \sum_{i=1}^m \left[p(A_i-B_i = k) + p(B_i-A_{i+1}=k)\right]\leq I_{d,m}^{C},
    \end{aligned}
\end{equation}
where $\alpha_k \in \mathbb{R} $, $A_{m+1} \equiv A_1 + 1$, and the Bell expression contains probabilities of $p(A_x - B_y = k)\coloneqq \sum_{j = 0}^{d - 1} p(j+k,j|xy)$. Here the arithmetic is taken \textit{modulo} $d$, and the classical bound is $I_{d,m}^{C} \coloneqq \max_{\boldsymbol{p} \in \mathcal{L}}I_{d,m}$, where $\mathcal{L}$ represents the local set. This class includes the Collins-Gisin-Linden-Massar-Popescu~\cite{prl_2002_cglmp} and the Salavrakos-Augusiak-Tura-Wittek-Ac\'{i}n-Pironio inequalities~\cite{prl_2017_satwap}, which have been used to experimentally certify nonlocality-based randomness~\cite{science_2018_jianweiwang,arXiv_2024_USTC}. In Result~\ref{result 3}, we prove that the violations of any chain inequality contribute to certifying randomness.

\begin{result}\label{result 3}
\normalfont For any $x^*\in\{1,\cdots,m\}$, if $\mathcal{C}_A\supseteq K_{1,m-1}(x^*)$, the corresponding distributions $\{p^{\text{obs}}(ab|xy)\in Q\}$ cannot violate any chain inequality, and thus any violation of a chain inequality is sufficient to certify nonlocality-based randomness.
\end{result}

\textit{Sketch of Proof.} Based on Eq.~\eqref{eq:proposition1}, if $\mathcal{C}_A\supseteq K_{1,m-1}(x^*)$, the resulting distributions can be decomposed as
\begin{equation}
    \begin{aligned}
    \label{eq:main12}
        p^{\text{obs}}(ab|xy) = \sum_{e\in[d]}p(e)p^e(ab|xy), ~~~\forall a,b,x,y,
    \end{aligned}
\end{equation}
where $p^e(ab|xy) \in Q$. For each $e$, the behavior $\{p^e(ab|xy)\}$ deterministically produces $a = e$ whenever $x = x^*$. Thus, $p^{\text{obs}}(ab|xy)$ lies within a partially deterministic polytope~\cite{PhDthesis_2014_Eirk}. In Proposition~\ref{proposition 2} in Appendix~\ref{appB}, we prove that no behavior inside such polytope can violate any chain inequality. 

The intuition behind this proof arises from the presence of the chain factor in Eq.~\eqref{eq:chain_manuscript}. Consider a deterministic behavior such that each measurement has a fixed outcome:
\begin{equation}
    \begin{aligned}
    \label{eq:chainconstant}
        A_{i} - B_{i} \equiv q_{2i-1},\\
        B_{i} - A_{i+1} \equiv q_{2i},
    \end{aligned}
\end{equation}
where $q_i\in[d]$ for $i\in\left\{1,\cdots,m\right\}$. The summation of these equations indicates that the classical bound is $I_{d,m}^C = \max_{q_1,\cdots,q_{2m}} \sum_{i = 1}^{2m} \alpha_{q_i}$ under the constraint of~\cite{arXiv_2022_mengyaohu,prl_2017_satwap}
\begin{equation}
    \begin{aligned}
    \label{eq:constraint}
        \sum_{i = 1}^{2m}q_{i} = -1.
    \end{aligned}
\end{equation}
When Alice and Bob share a strong correlation such that the difference between their outcomes are constants, they can also provide $I_{d,m} = \sum_{i = 1} ^{2m} \alpha_{q_{i}'}$ but without the constraint Eq.~\eqref{eq:constraint}, where $q_i' \in [d]$. For instance, this can be achieved by a no-signaling behavior with $m$-input and $d$-output~\cite{prl_2017_satwap}:
\begin{equation}
    \begin{aligned}
    \label{eq:ns_example}
        p(ab|yy) = p(ab|y+1,y) = 
        \begin{cases}
            1/d \quad &\text{if} ~ a = b,\\
            0 \quad &\text{if} ~ a \neq b,
        \end{cases}
    \end{aligned}
\end{equation}
where $y\in\left\{1,\cdots,m\right\}$. For the special case of $(x,y) = (1,m)$, $p(ab|1m) = 1/d$ if $a = b - 1$, otherwise $p(ab|1m) = 0$. For all the other input combinations, we have $p(ab|xy) = 1/d^2$. This behavior also satisfies Eq.~\eqref{eq:chainconstant} but gives $\{q_i' = 0\}_{i=1}^{2m}$. 
Other values of $\{q_i'\}_{i=1}^{2m}$ can be given by considering similar no-signaling behaviors with the same marginal probabilities of $p_A(a|x) = p_B(b|y) = 1/d$. 
Next, if we restrict $A_{x^*}$ to a constant $e\in[d]$, and the other measurements still yield outcomes as before, i.e., $p(ab|x^*y) = \delta_{a,e}p_B(b|y)$, $\forall a, b,y$. For the sake of clarity, we take $x^* = 1$ such that the constants $\{q_i'\}_{i=2}^{2m-1}$ remain unchanged. The resulting no-signaling behavior gives
\begin{equation}
\label{eq:main16}
    I_{d,m} = \sum_{i = 2} ^{2m-1} \alpha_{q_{i}'}  +  {1\over d} \sum_{q'_{1}\in[d]} \alpha_{q'_{1}} + {1\over d} \sum_{q'_{2m}\in[d]} \alpha_{q'_{2m}},
\end{equation}
which is a convex combination of $\sum_{i = 1} ^{2m} \alpha_{q_{i}^{''}}$ with $\{q_i^{''}\}_{i=1}^{2m}$ satisfying the constraint Eq.~\eqref{eq:constraint} and thus it cannot exceed the classical bound $I_{d,m}^C$.

\begin{result}\label{result 1}
\normalfont A bipartite state $\rho_{AB}$ can be used to certify steering-based randomness, determined by whether it can exhibit steerability with restriction to two measurement settings.
\end{result}
\begin{proof}
Suppose there are two measurements performed by Alice such that $\rho_{AB}$ can express steerability, i.e., the resulting assemblage does not admit LHS model, the corresponding assemblage must lie outside the set $\mathcal{R}^{\rm S}_{x^*}$, then the steering-based randomness can be certified. 
Conversely, if $\rho_{AB}$ cannot exhibit steerability when Alice performs any pair measurements, for the assemblage obtained in scenarios where $m \geq 3$, any ensemble $\{\sigma_{a|x}^{\rm obs}\}_{a,x\in\{x_1,x_2\}}$ associated with any two measurements $\{x_1,x_2\}$ must be unsteerable. Based on Proposition~\ref{proposition 1}, such a partially unsteerable assemblage can be given by a 
compatibility structure containing the star subgraph $K_{1,m-1}(x^*)$ for any $x^*$, and thus fails to certify randomness.
\end{proof}

As an example, consider a family of one-way steerable two-qubit states with parameters of $p \in(1/2,1/\sqrt{2}]$ and $\theta \in (0,\pi/4]$:
\begin{equation}
    \begin{aligned}
        \rho_{AB}^{p,\theta} = p|\Psi_\theta\rangle \langle \Psi_\theta| + (1-p)I_A/2\otimes \rho_B^\theta,
    \end{aligned}
\end{equation}
where $|\Psi_\theta\rangle = \cos(\theta)|00\rangle + \sin(\theta)|11\rangle$ and $\rho_B^\theta = \text{Tr}_A\left(|\Psi_\theta\rangle \langle \Psi_\theta|\right)$. In the experiment~\cite{PRL_2017_jinshixu}, these states are steerable with $m\geq 3$ measurements, but cannot demonstrate steerability under any two projective measurements~\cite{np_2010_howard,pra_2016_nicolas}. According to Result~\ref{result 1}, these states fail to certify randomness for arbitrary projective measurements.

So far, we have presented our main findings, which delineate the necessary and sufficient conditions for randomness certification in different scenarios, along with the methods to distinguish useful quantum resources. These results address the scenario where randomness is extracted from a single measurement performed by each user. To improve the quantification of randomness, we consider that randomness can be extracted from multiple measurements. In this framework, we show that the aforementioned conditions, corresponding to the star graph, can be relaxed.

Now Eve optimizes her strategy for a set of measurements $X = \left\{x_1,x_2,\cdots,x_s\right\}$, where $s\leq m$. The target function in Eq.~\eqref{eq:definition_steer_rand} is changed to
\begin{equation}
    \begin{aligned}
    \label{eq:generalization}
        P^{\rm S}_{\text {guess }} \left(X\right) & = \max_{\left\{\sigma_{a \mid x}^{\gamma}\right\}} \sum_{x\in X} p(x) \sum_{\gamma}\operatorname{Tr}\left(\sigma_{a=\gamma(x) \mid x}^\gamma\right),
    \end{aligned}
\end{equation}
where $p(x)>0$ denotes the relative frequencies of inputs, and $\sum_{x\in X}p(x) = 1$. Here Eve's guess for different measurements is a string $\gamma = (a_{x=x_1},\cdots,a_{x = x_s})$, where $\gamma(\cdot)$ being a function from $X$ to $[d]$. Notably, higher randomness can be certified when assuming Eve's strategies for all choices of $x\in X$ are identical.

\begin{result}\label{result 4}
\normalfont Given an assemblage $\{\sigma_{a|x}^{\rm obs}\}$, the guessing probability satisfies $P^{\rm S}_{\text{guess}} \left(X\right) < 1$ iff it lies outside the set $\mathcal{R}_{X}^{\rm S}\coloneqq\left\{\sigma_{a|x}: \sum_a\sigma_{a|x} = \sum_a \sigma_{a|x'},~\forall x,x',~~\mathcal{C}_{A} \supseteq K_{s,m-s} (X)\right\}$ for any $X\subseteq\{1,\cdots,m\}$. 
\end{result}
Here, the structure $K_{s,m-s} (X)$ implies the set of measurements $X$, when combined with every other measurement~$x\notin X$, are compatible. An example is shown in Fig.~\ref{fig:structure}~(c). Since the set $\mathcal{R}_{X}^{\rm S}$ constrained by the structure $K_{s,m-s}(X)$ is a subset of $\mathcal{R}_{x^*}^{\rm S}$ in Proposition~\ref{proposition 1} when $x^*\in X$, an assemblage is more likely to fall outside the former set, thus the requirement of measurement incompatibility for randomness certification is relaxed.

\begin{proof}
The ``only if'' direction is straightforward, if $\mathcal{C}_A\supseteq K_{s,m-s}(X)$, then the set of measurements $X$ can be simulated by a parent POVM. By treating this set as one measurement with $d^s$ outcomes, then $P_{\text{guess}}^{\rm S}(X) = 1 $ according to Proposition~\ref{proposition 1}. For the ``if'' direction, if Eve's guess $\gamma$ is always the same as the outcomes of measurement in $X$, similar to Eq.~\eqref{eq:proposition1}, one can consider these measurements as being carried out by Eve. This allows the observed assemblage to be explained by a scenario where the compatibility structure of the steering party's measurements is $K_{s,m-s}(X)$. In fact, when $s = m$, this result suggests that all steerable states can certify $P_{\text{guess}}^{\rm S} (X)<1$ in Eq.~\eqref{eq:generalization}. 
More details can be found in Appendix~C. 
\end{proof}

\textit{Conclusion.---} In conclusion, we provide the necessary and sufficient conditions for certifying randomness for different scenarios in terms of measurement incompatibility, depicting by a star graph. As a consequence, we prove that violating any chain inequality contributes to certifying randomness. Additionally, we identify quantum states useful for steering-based randomness. 
Finally, we consider the case that takes the input distributions into account, revealing the potential for enhanced randomness certification. 
Our investigation underscores the role of measurement incompatibility in randomness certification, offering insights into uncovering their relation. 

There are several pertinent issues that need to be addressed of attaining a comprehensive understanding of the relationship between entanglement and randomness. For instance, when an assemblage can certify nonzero randomness, the amount of certifiable randomness is related to the maximal amount of the ``$K_{1,m} (x^*)$-partially-unsteerable'' assemblage it contains~(see Appendix~\ref{appD}). 
Moreover, techniques such as testing or quantifying measurement incompatibility in a device-independent or one-sided device-independent manner~\cite{prl_2019_quintino,prr_2021_shiliangchen,prl_2009_wolf,prl_2016_shinliangchen} can be applied to identify behaviors that lead to randomness certification. In other words, the task of randomness certification can be used to witness some specific structures of measurement incompatibility.

% In this work, firstly, we prove the necessary and sufficient conditions for certifying the steering-based and nonlocality-based randomness: the compatibility structure of each uncharacterized device's measurements cannot be explained by the star graph. Thus, some existing methods can be applied to find the behaviors satisfying these conditions. For instance, the methods of testing or quantifying measurement incompatibility in a device-independent manner~\cite{prl_2019_quintino,prr_2021_shiliangchen}. 
% Consequently, we provide conditions for identifying quantum states useful for steering-based randomness, and demonstrate that some steerable states cannot certify steering-based randomness, regardless of the steering party's measurements. 
% Moreover, while violating the classical bound of a given Bell inequality does not always guarantee certifiable randomness~\cite{arXiv_2024_USTC}, and even when maximally violated~\cite{arXiv_2024_stefano}, we prove that violating any chain inequality contributes to certifying randomness. Finally, we consider a weaker assumption for Eve and also provide a necessary and sufficient condition for the randomness certification. 

\begin{acknowledgments}
We acknowledge enlightening discussions with Otfried G\"uhne, H. Chau Nguyen and Mengyao Hu. This work was supported by Beijing Natural Science Foundation (Grant No.~Z240007), and the National Natural Science Foundation of China (No.~12125402, No~12350006, No~123B2073), the Innovation Program for Quantum Science and Technology (No.~2021ZD0301500). Y.L. acknowledges financial support from the China Scholarship Council (Grant No.~202306010353).
J.T. acknowledges the support received by the Dutch National Growth Fund (NGF), as part of the Quantum Delta NL programme, as well as the support received from the European Union’s Horizon Europe research and innovation programme through the ERC StG FINE-TEA-SQUAD (Grant No. 101040729). The views and opinions expressed here are solely those of the authors and do not necessarily reflect those of the funding institutions. Neither of the funding institution can be held responsible for them.
\end{acknowledgments}

\bibliography{Ref}

\appendix

\onecolumngrid

\newpage

\begin{appendix}

\section{Proof of Proposition 1}
\label{appA}
In Proposition~\ref{proposition 1} in manuscript, we demonstrate that given an assemblage $\{\sigma_{a|x}^{\rm obs}\}$, the guessing probability satisfies $P_g^{\rm S}(x^*) < 1$ iff it lies outside the set $\mathcal{R}^{\rm S}_{x^*}$ for any $x^*\in\{1,\cdots,m\}$, where $\mathcal{R}^{\rm S}_{x^*}$ is defined as:
\begin{equation}
    \left\{ \sigma_{a|x}: 
    \begin{array}{ll}
        &\sigma_{a|x} = {\rm Tr}_A\left[\left(M_{a|x}\otimes \mathbb{I}^B \right)\rho_{AB}\right], \mathcal{C}_{A} \supseteq K_{1,m-1} (x^*),\\
        &M_{a|x}\succeq 0, \sum_a M_{a|x} = \mathbb{I}^A, \rho_{AB}\succeq 0, {\rm Tr} (\rho_{AB}) = 1, 
    \end{array} \right\}.
\end{equation}
In the following proof, we will prove the ``only if'' direction. The ``if'' direction has been proved in the work~\cite{PRL_2024_yi}, we will also briefly review it in the operational interpretation.
\begin{proof}
If $\mathcal{C}_A \supseteq K_{1,m-1}\left(x^*\right)$, then there exist $m-1$ parent POVMs $\{G_{\lambda}^i\}_{\lambda}$ for $i=1,2,\cdots,m-1$, such that Alice's measurements can be decomposed as %\jt{Add symbol}
\begin{equation}
    \begin{aligned}
        M_{a|x} &= \sum_{\lambda} p^i(a|x,\lambda)G^i_\lambda, \quad \forall a,x \in e_i, \jt{\quad} e_i\in E\left(K_{1,m-1}(x^*)\right),
    \end{aligned}
\end{equation}
where $e_i$ is an edge of the star graph with $i\in\{1,\cdots,m-1\}$ and $E\left(K_{1,m-1}(x^*)\right)$ denotes the set of edges of graph $K_{1,m-1}(x^*)$. For the sake of simplicity, we will directly use $e_i$ to denote the edge in the star graph $K_{1,m-1}(x^*)$. Here we note that the notation $x\in e_i$ means the measurement $x$ belongs to the subset of measurements associated to the edge $e_i$. Since $p^i(a|x,\lambda) = \sum_\mu D_{\text{Local}}(a|x,\mu)p^i(\mu|\lambda)$, one can rewrite Alice's measurements as $M_{a|x}= \sum_{\mu}D_{\text{Local}}(a|x,\mu)\mathcal{G}^i_\mu$, where $\mathcal{G}^i_\mu = \sum_{\lambda}p^i(\mu|\lambda)G_\lambda^i$ and $\sum_\mu \mathcal{G}^i_\mu = \mathbb{I}^A$. Here the deterministic behaviors are $D_{\text{Local}}(a|x,\mu) = \delta_{a,\mu(x)}$ with $\mu(\cdot)$ being a function from $\{1, \cdots, m\}$ to $[d]$, where $[d]\coloneqq \{0,1,\cdots,d-1\}$. 
We note that the number of deterministic behaviors are the same for all edges $e_i$ in this work, thereby there is no subscript $i$ in the hidden variables. It can be simply generalized to the scenario that different measurements $x$ produce different numbers of outcomes. 

Thus, for a given bipartite state $\rho_{AB}$, the observed assemblage can be decomposed as
\begin{equation}
    \begin{aligned}
    \label{eq:partiallyunsteerable}
    \sigma_{a|x}^{\text{obs}} &= \sum_{\mu}D_{\text{Local}}(a|x,\mu)\tau_\mu^{i},\quad \forall a,x \in e_i,
    \end{aligned}
\end{equation}
where $\tau_\mu^i \coloneqq \text{Tr}_A\left[\left(\mathcal{G}^i_\mu \otimes \mathbb{I}^B\right)\rho_{AB}\right]$. We denote the assemblages with the decomposition Eq.~\eqref{eq:partiallyunsteerable} as the ``$K_{1,m-1} (x^*)$-partially-unsteerable'' assemblage. Then, we can directly give an optimal solution of Eq.~\eqref{eq:definition_steer_rand} in the manuscript. For clarity, we denote it as
\begin{equation}
\label{eq:A4}
    \xi_{a|x}^{e} \coloneqq \sum_{\mu\in\Lambda^{(e)}}D_{\text{Local}}(a|x,\mu) \tau_\mu^i, \quad \forall e, a, x \in e_i,
\end{equation}
where $\Lambda^{(e)} \coloneqq \{\mu|\mu(x^*) = e\}$. It is a subset of all deterministic behaviors in which the behaviors always produce outcome $e$ when receiving input $x^*$. We note that the constructions of $x = x^*$ for all edges $e_i$ are the same:
\begin{equation}
\begin{aligned}
    \xi_{a=e|x^*}^e &= \sum_{\mu}D_{\text{Local}}(a = e|x^*,\mu) \tau_\mu^i = \sigma_{e|x^*}^{\text{obs}},\quad \forall e,i.
\end{aligned}
\end{equation}
%\jt{I find this cycle of inequalities confusing with the repeated expressions. I'd suggest a two-line equation xi = sum and sigma = sum.}
For $a\neq e$, we have $\xi_{a|x^*}^e = 0$. With this construction, Eve can always guess the outcomes of measurement $x^*$ correctly:
\begin{equation}
    \begin{aligned}
        \sum_e\text{Tr}\left(\xi_{a=e|x^*}^e\right) = \sum_\mu \text{Tr}\left(\tau_\mu^i\right) = 1.
    \end{aligned}
\end{equation}
Finally, one can easily check that this construction aligns with the observed assemblage, i.e., $\sum_e \xi_{a|x}^e = \sigma_{a|x}^{\rm obs}$. Since all the vertices in the star graph are connected with the vertex $x^*$, it also satisfies the no-signaling conditions
\begin{equation}
    \begin{aligned}
        \sum_a \xi_{a|x^*}^e = \sum_a\xi_{a|x^i}^{e} = \sum_{\mu\in\Lambda^{(e)}} \tau_{\mu}^i, \quad \forall e,(x^*,x^i) =  e_i.
    \end{aligned}
\end{equation}
Therefore, when $\mathcal{C}_A \supseteq K_{1,m-1}\left(x^*\right)$, the observed assemblage can be decomposed by $\{\xi_{a|x}^e\}$ and provides $P_{g}^{\rm S}(x^*) = 1$, which concludes the proof.
\end{proof}

In the operational interpretation, for the solution $\{\xi_{a|x}^e\}$ giving $P_g^{\rm S}(x^*) = 1$ of Eq.~\eqref{eq:definition_steer_rand} in the manuscript, since every bipartite no-signaling assemblage admits quantum realization~\cite{pr_1957_ghjw,hpa_1989_ghjw,pla_1993_ghjw,mpcps_1936_schro}, one can find a physical scenario for the constructed assemblage $\{\xi_{a|x}^e\}$. For each $e$, one can always find a quantum state $\varrho_{AB}^e$ and POVMs $\left\{\mathcal{M}_{a|x}^e\right\}_{a,x}$ such that
\begin{equation}
    \begin{aligned}
        \label{eq:quantumrealization}
        \xi_{a|x}^e = p(e)\text{Tr}_A\left[ \left(\mathcal{M}^e_{a|x}\otimes\mathbb{I}^B \right)\varrho_{AB}^e\right],\quad \forall e, a, x \in e_i,
    \end{aligned}
\end{equation}
where $p(e) = \sum_a\text{Tr}\left(\xi_{a|x}^e\right) = \text{Tr}\left(\sigma_{e|x^*}^{\text{obs}}\right)$. %It indicates that $\sigma_{a|x^*}^{\text{obs}} = \xi_{a|x^*}^a = \sum_{a'}\xi_{a'|x^*}^a = p(a) \text{Tr}_A\left(\rho_{AB}^a\right)$.
This construction describes a scenario in which, when Eve gives a guess $e$ with probability $p(e)$, Alice performs measurements $\left\{\mathcal{M}_{a|x}^e\right\}_{a,x}$ on the state $\varrho_{AB}^e$ according to $e$. 
%Since ${\rm Tr} \left(\xi_{a|x^*}^e \right) = 0$ \yl{whenever} $a \neq e$, Alice produces outcome $a = e$ if she receives input $x = x^*$. 
It can be done by distributing a state among Alice, Bob, and Eve, and considering an auxiliary classical `flag' system~\cite{njp_2015_paul} labeled by $A'$:
\begin{equation}
    \begin{aligned}
    \label{eq:operational}
        \rho_{EAA'B} &= \sum_e p(e)|ee\rangle_{EA'}\langle ee| \otimes \varrho_{AB}^e,\\
        \xi_{a|x}^e &= \text{Tr}_{EAA'} \left[\left( |e\rangle_E\langle e|\otimes \mathcal{M}_{a|x}\otimes \mathbb{I}^B\right) \rho_{EAA'B} \right], \quad \forall a,e,x,
    \end{aligned}
\end{equation}
where Eve performs measurement $\left\{|e\rangle_E\langle e|\right\}_e$ and Alice performs measurements $\left\{\mathcal{M}_{a|x} = \sum_e|e\rangle_{A'}\langle e| \otimes \mathcal{M}_{a|x}^e\right\}_{a,x}$. Here $\left\{|e\rangle\right\}_{e}$ are orthogonal states for $e \in [d]$, and the role of `flag' system $A'$ is to remove the superscript $e$ in $\left\{\mathcal{M}_{a|x}^e\right\}_{a,x}$. When Eve produces outcome $e$ with probability $p(e)$, the conditional state held by Alice and Bob is
\begin{equation}
    \begin{aligned}
        \rho_{A'AB}^e = |e\rangle_{A'}\langle e|\otimes \varrho_{AB}^e.
    \end{aligned}
\end{equation}
Then the additional degree of freedom $|e\rangle_{A'}\langle e|$ is read by Alice, such that she can perform measurements $\left\{\mathcal{M}_{a|x}^{e}\right\}_{a,x}$ according to the guess $e$. 
For the sake of clarity, we treat the subsystems $A$ and $A'$ together as $\boldsymbol{A}$. %\jt{I am not sure renaming A to AA' makes things more clear. How about the new A you make it like $\boldsymbol{A}$?}. 
Based on Eq.~\eqref{eq:operational}, the observed assemblage can be rewritten as
\begin{equation}
    \begin{aligned}
        \sigma_{a|x}^{\text{obs}} = \sum_e \xi_{a|x}^e = \text{Tr}_{E\boldsymbol{A}}\left[ \left( \mathcal{M}_{a|x}^{E\boldsymbol{A}} \otimes \mathbb{I}^B\right)\rho_{E\boldsymbol{A}B} \right], \quad \forall a,x,
    \end{aligned}
\end{equation}
where $\rho_{E\boldsymbol{A}B} = \sum_ep(e)|e\rangle_{E}\langle e| \otimes \rho_{\boldsymbol{A}B}^e$ and $\mathcal{M}_{a|x}^{E\boldsymbol{A}} = \mathbb{I}^E\otimes \mathcal{M}_{a|x}$ for $x\neq x^*$. For $x = x^*$, since $P_g^{\rm S}(x^*) = 1$, we have ${\rm Tr} \left(\xi_{a|x^*}^e \right) = 0$ whenever $a \neq e$. Then, $\sum_{e}\xi_{a|x^*}^e = \sum_{a'}\xi_{a'|x^*}^{e = a}$ and $\mathcal{M}_{a|x^*}^{E\boldsymbol{A}} = |a\rangle_E\langle a|\otimes \mathbb{I}^{\boldsymbol{A}}$. Additionally, this result also implies that a ``$K_{1,m-1} (x^*)$-partially-unsteerable'' assemblage (Eq.~\eqref{eq:partiallyunsteerable}) can be explained by a scenario where $\mathcal{C}_A \supseteq K_{1,m-1}(x^*)$. In other words, the ``$K_{1,m-1} (x^*)$-partially-unsteerable'' set $\left\{\sigma_{a|x}: \sum_a\sigma_{a|x} = \sum_a\sigma_{a|x'},~\forall x,x',~ \sigma_{a|x} = \sum_{\mu} D_{\rm Local}(a|x,\mu)\tau_{\mu}^i,~\forall a,x\in e_i\right\}$ and the set $\left\{\sigma_{a|x}: \sum_a\sigma_{a|x} = \sum_a\sigma_{a|x'},~\forall x,x',~ \mathcal{C}_A \supseteq  K_{1,m-1}(x^*)\right\}$ are the same. %\jt{same comment as in the main text} 

\section{Proof of Result 2}
\label{appB}
%\jt{This introduction is nice, even if it is an appendix. Consider also in Appendix A.}
In the manuscript, we claim that all the chain inequalities with $m\geq 2$ inputs and $d\geq 2$ outputs in the form of 
\begin{equation}
    \begin{aligned}
    \label{eq:chaininequality}
        I_{d,m} \coloneqq \sum_{k=0}^{d-1}\alpha_k \sum_{i=1}^m \left[p(A_i-B_i = k) + p(B_i-A_{i+1}=k)\right]\leq I_{d,m}^{C}
    \end{aligned}
\end{equation}
cannot be violated when the compatibility structure $\mathcal{C}_A$ of Alice's measurements is isomorphic to a hypergraph containing $K_{1,m-1}\left(x^*\right)$, i.e., $K_{1,m-1}\left(x^*\right) \subseteq \mathcal{C}_A$. For the case of $m = 2$, $K_{1,m-1}\left(x^*\right) \subseteq \mathcal{C}_A$ means Alice performs compatible measurements, so the chain inequalities cannot be violated. In the following, we focus on the cases of $m\geq 3$.

%To show this, we use proposition 1 to describe all behaviors that align with $K_{1,m-1}\left(x^*\right)$ without loss of any generality.

Before proving this result, we first consider the joint probability distributions given when Alice (Bob) receives $m_A$ ($m_B$) inputs and produces $d_A$ ($d_B$) outputs, and treat this behavior as a vector $\boldsymbol{p} = \left(p(00|11),\cdots,p(d_A-1,d_B-1|m_Am_B)\right) ^{T}\in \mathbb{R}^{t}$ with $t = m_Am_Bd_Ad_B$. % \jt{if you do this dimension counting you are not taking into account the extra constraints which reduce the dimension like normalization and no-signalling. Nevermind, I just read two sentences afterwards :)}. 
%\jt{Strictly speaking, subspace is a linear algebra concept, so it does not admit inequality constraints or equality constraints that are not linear and homogeneous (even though affine subspaces admit the latter). I would either call them convex subsets or polytopes \textit{embedded} in $d$-dimensional subspaces}
Therefore, two convex subsets \textit{embedded} in $t$-dimensional subspace of $\mathbb{R}^t$ are characterized as:
\begin{align}
\mathcal{P}(d_A,d_B,m_A,m_B) &\coloneqq \left\{ \boldsymbol{p} \in \mathbb{R}^{t} : p(ab|xy) \geq 0, \sum_{ab} p(ab|xy) = 1, \forall a, b, x, y \right\},   \nonumber  \\ 
\mathcal{NS}(d_A,d_B,m_A,m_B) &\coloneqq \left\{ \boldsymbol{p} \in \mathbb{R}^{t} : p(ab|xy) \geq 0, \sum_{ab} p(ab|xy) = 1, \forall a,b,x,y, ~\sum_{a} p(ab|xy) = \sum_{a} p(ab|x'y), ~\forall b, x, x', y,\right. \\ \nonumber
&\qquad \qquad \left.~\sum_{b} p(ab|xy) = \sum_{b} p(ab|xy'),  \forall a, x, y, y'\right\}.
\end{align}
A subset of behaviors $\boldsymbol{p} \in \mathbb{R}^{t}$ can be viewed as points belonging to the probability space $\mathcal{P}$ of dimension $\dim \left(\mathcal{P}\right) = \left( d_Ad_B-1 \right)m_Am_B$. When assuming that the local marginal probabilities of Alice are independent of Bob's measurement setting $y$ (and the other way around), the behaviors lie within the no-signaling subspace $\mathcal{NS}$ of dimension $\dim \left(\mathcal{NS}\right) = m_A(d_A-1) + m_B(d_B - 1) + m_Am_B(d_A-1)(d_B-1)$.

For the chain inequality, as mentioned in the manuscript, its classical bound is given by assigning a deterministic outcome to each of the $A_i$ and $B_i$ such that~\cite{prl_1982_fine,arXiv_2022_mengyaohu,prl_2017_satwap}
\begin{equation}
    \begin{aligned}
    \label{eq:chaincharacter}
        A_1 - B_1 &= q_1, \\
        B_1 - A_2 &= q_2, \\
        &\vdots\\
        B_m - (A_1 + 1) &= q_{2m},
    \end{aligned}
\end{equation}
where $q_i\in[d]$ for $i=1,\cdots,2m$, and $\sum_{i=1}^{2m} q_i = -1$. Here we note that $A_{m+1} \equiv A_1 + 1$. Therefore, the classical bound is $I_{d,m}^C = \max_{q_1,\cdots, q_{2m}} \left(\sum_{i=1}^{2m}\alpha_{q_i} \right)$. 
Then, we will first define a set of probability distributions that cannot violate the chain inequality.

\begin{definition}
\label{definition1}
    Let $\boldsymbol  q\coloneqq (q_1,q_2,\cdots,q_{2m-1})^T \in \mathbb{Z}_d^{2m-1}$ be a vector, a set of probability distributions $ \mathbb{P}_{\boldsymbol{q}}^c \subset \mathcal{P}(d,d,m,m)$ is defined in which the behaviors $\boldsymbol{p} \in \mathcal{P}(d,d,m,m)$ satisfy
    \begin{equation}
        \label{eq:chainconstraintP}
        p\left(A_i - B_i = q_{2i-1}\right) = p\left(B_i - A_{i+1} = q_{2i}\right) = 1,\qquad \forall i\in\{1,2,\cdots,m\},
    \end{equation}
    where $q_{2m} \coloneqq -\|\boldsymbol  q\|_1 - 1$ and $\|\cdot\|_1$ denotes the $\ell^1$-norm.
\end{definition}

\begin{lemma}\label{lemma addition2}
For any vector $\boldsymbol q \in \mathbb{Z}_d^{2m-1}$, no behavior in the set $\mathbb{P}_{\boldsymbol{q}}^c$ can violate any chain inequality Eq.~\eqref{eq:chaininequality}. Thus, no behavior in ${\rm Conv}\left( \cup_{\boldsymbol{q}} \mathbb{P}_{\boldsymbol{q}}^c\right)$ can violate any chain inequality, where ${\rm Conv}\left( \cdot \right)$ means the convex hull of a given set. 
\end{lemma}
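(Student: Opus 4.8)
The plan is to evaluate the chain functional $I_{d,m}$ of Eq.~\eqref{eq:chaininequality} directly on an arbitrary behavior $\boldsymbol{p}\in\mathbb{P}_{\boldsymbol{q}}^c$ and recognize the resulting number as one of the candidates in the maximization that yields the classical bound. The central remark is that $I_{d,m}$ is a \emph{linear} functional which depends on $\boldsymbol{p}$ only through the $2m$ one-dimensional ``difference distributions'' $k\mapsto p(A_i-B_i=k)$ and $k\mapsto p(B_i-A_{i+1}=k)$, $i=1,\dots,m$. For $\boldsymbol{p}\in\mathbb{P}_{\boldsymbol{q}}^c$, the defining constraints Eq.~\eqref{eq:chainconstraintP} force each of these distributions to be a point mass, $p(A_i-B_i=k)=\delta_{k,q_{2i-1}}$ and $p(B_i-A_{i+1}=k)=\delta_{k,q_{2i}}$ for all $i$ (using that each is normalized over $k\in[d]$). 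Substituting into Eq.~\eqref{eq:chaininequality} and exchanging the two finite sums, I would obtain $I_{d,m}(\boldsymbol{p})=\sum_{i=1}^{m}\bigl(\alpha_{q_{2i-1}}+\alpha_{q_{2i}}\bigr)=\sum_{j=1}^{2m}\alpha_{q_j}$, which no longer depends on the particular $\boldsymbol{p}$, only on $\boldsymbol{q}$. (If $\mathbb{P}_{\boldsymbol{q}}^c$ happens to be empty, the claim for that $\boldsymbol{q}$ is vacuous.)

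It then remains to identify this value with a term entering $I_{d,m}^C$. By Definition~\ref{definition1}, $q_{2m}=-\|\boldsymbol{q}\|_1-1$ in $\mathbb{Z}_d$; taking each $q_i$ ($i\le 2m-1$) in its canonical representative in $[d]$ gives $\|\boldsymbol{q}\|_1=\sum_{i=1}^{2m-1}q_i$, whence $\sum_{j=1}^{2m}q_j\equiv-1\Mod{d}$, which is precisely the constraint Eq.~\eqref{eq:constraint} appearing in $I_{d,m}^C=\max\{\sum_{j=1}^{2m}\alpha_{q_j'}:q_j'\in[d],\ \sum_j q_j'\equiv-1\Mod{d}\}$ (as recorded above Eq.~\eqref{eq:chaincharacter}). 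Hence $I_{d,m}(\boldsymbol{p})\le I_{d,m}^C$ for every $\boldsymbol{p}\in\mathbb{P}_{\boldsymbol{q}}^c$, every $\boldsymbol{q}\in\mathbb{Z}_d^{2m-1}$, and every choice of coefficients $\{\alpha_k\}$; that is, no behavior in $\mathbb{P}_{\boldsymbol{q}}^c$ violates any chain inequality.

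The statement for $\mathrm{Conv}\bigl(\cup_{\boldsymbol{q}}\mathbb{P}_{\boldsymbol{q}}^c\bigr)$ then follows at once from linearity of $I_{d,m}$: writing such a point as a finite convex combination $\boldsymbol{p}=\sum_t\mu_t\boldsymbol{p}_t$ with $\boldsymbol{p}_t\in\mathbb{P}_{\boldsymbol{q}^{(t)}}^c$, $\mu_t\ge0$, $\sum_t\mu_t=1$, one gets $I_{d,m}(\boldsymbol{p})=\sum_t\mu_t I_{d,m}(\boldsymbol{p}_t)\le\sum_t\mu_t I_{d,m}^C=I_{d,m}^C$. I do not expect a genuine obstacle in this lemma: it is essentially a bookkeeping computation, the only delicate points being (i) that the exchange of sums together with $p(A_i-B_i=k)=\delta_{k,q_{2i-1}}$ really removes all dependence on $\boldsymbol{p}$ beyond $\boldsymbol{q}$, and (ii) the modular identity $\|\boldsymbol{q}\|_1\equiv\sum_i q_i\Mod{d}$ that makes the constraint Eq.~\eqref{eq:constraint} hold, so that $\sum_j\alpha_{q_j}$ genuinely appears in the maximization defining $I_{d,m}^C$. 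The substantive work is deferred to Proposition~\ref{proposition 2}, which must establish that the partially deterministic polytope of Eq.~\eqref{eq:main12} is contained in $\mathrm{Conv}\bigl(\cup_{\boldsymbol{q}}\mathbb{P}_{\boldsymbol{q}}^c\bigr)$ — that is where the star structure $K_{1,m-1}(x^*)$ actually enters.
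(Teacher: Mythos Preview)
Your proof is correct and follows essentially the same route as the paper's: both observe that the constraints of $\mathbb{P}_{\boldsymbol{q}}^c$ force the difference distributions to be point masses at $q_{2i-1},q_{2i}$, so $I_{d,m}(\boldsymbol{p})=\sum_{j=1}^{2m}\alpha_{q_j}$ with $\sum_j q_j\equiv -1\Mod d$, and then invoke linearity for the convex hull. The paper's argument is terser, but your explicit verification of the delta-mass identity and the modular constraint is exactly the content it leaves implicit.
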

\begin{proof}
For any vector $\boldsymbol{q}\in \mathbb{Z}_d^{2m-1}$, a behavior satisfying Eq.~\eqref{eq:chainconstraintP} means that the difference between the outcomes of measurements $A_i$ ($B_i$) and $B_i$ ($A_{i+1}$) is constant $q_{2i-1}$ ($q_{2i}$), although their outcomes do not necessarily need to be produced deterministically. Since the summation of all these constants is $\sum_{i=1}^{2m}q_i = -1$, the behaviors in the set $\mathbb{P}_{\boldsymbol{q}}^c$ as well as any behaviors lying within their convex hull cannot violate any chain inequality. 
\end{proof}
\begin{definition}
    For a given input subset $\mathcal{I}\subseteq\{A_1,\cdots,A_m\}\cup\{B_1,\cdots,B_m\}$, we define a projection operation $\mathrm{\Pi}_{\mathcal{I}} $ such that for a behavior $\boldsymbol{p} \in \mathcal{P}(d,d,m,m)$, $\mathrm{\Pi}_{\mathcal{I}} \boldsymbol{p}$ is defined by the elements
    \begin{equation}
        \mathrm{\Pi}_{\mathcal{I}} {p}(ab|xy) = p(ab|xy),  \qquad A_x,B_y \in \mathcal{I}.
    \end{equation}
    We also define a subset of input pairs considered in this chain inequality as $\mathcal{I}^c = \left\{(A_y,B_y),(A_{y+1},B_y) : y = 1,\cdots,m\right\}$, and $\mathrm{\mathrm{\Pi}}_{\mathcal{I}^{c}} \boldsymbol{p} = \left(p(00|11),\cdots,p(d-1,d-1|1m)\right)^T \in \mathbb{R}^{2md^2}$, where $A_{m+1} \equiv A_1$.
\end{definition}
Lemma~\ref{lemma addition2} implies that a behavior $\boldsymbol{p} \in \mathcal{NS}(d,d,m,m)$ cannot violate any chain inequality if the vector $\mathrm{\Pi}_{\mathcal{I}^c}\boldsymbol{p}$ can be written as a convex combination of $\{\mathrm{\Pi}_{\mathcal{I}^{c}}\boldsymbol{P}_{\boldsymbol  q} \}_{\boldsymbol{q}}$, where $\boldsymbol{P}_{\boldsymbol  q}\in\mathbb{P}_{\boldsymbol{q}}^c$ and $\boldsymbol{q}\in \mathbb{Z}_d^{2m-1}$. 

In the following, to prove Result~\ref{result 3} in the manuscript, for the sake of concreteness, let us set $x^* = 1$. The cases for other $x^*\neq 1$ can be analyzed in an analogous manner, as relabelling the inputs does not change the chain factor. 

In Lemma~\ref{lemma addition}, we will focus on the input subset of $\mathcal{I}^c\backslash \left(A_{x^*},\cdot\right)$, which means we remove the input pairs that contain $A_{x^*}$ in $\mathcal{I}^c$. Thus, we consider the no-signaling behaviors $\boldsymbol{\kappa}\in\mathcal{NS}(d,d,m-1,m)$ in which Alice only receives $m-1$ inputs (remove $x^*$). For a given vector $\boldsymbol{l}\coloneqq \left(q_2,q_3,\cdots,q_{2m-1}\right) \in \mathbb{Z}^{2m-2}_d$ with elements $\{q_i\}$ related to Eq.~\eqref{eq:chainconstraintP}, we define an auxiliary function $\mathcal{F}_{\boldsymbol{l}}^m(t)$ based on $\boldsymbol{\kappa}$ and prove the following properties of it. These properties will be useful in Lemma~\ref{lemma 2}, which shows that the behavior $\boldsymbol{\kappa}$ can be written as a convex combination of a set of behaviors $\{\boldsymbol{F}_{\boldsymbol{l}}\}_{\boldsymbol{l}}$ satisfying Eq.~\eqref{eq:chainconstraintP}. For arbitrary $\boldsymbol{l}\in\mathbb{Z}_{d}^{2m-2}$, the behavior $\boldsymbol{F}_{\boldsymbol{l}}$ means that the difference between outcomes of measurements $(A_x,B_y) \in \mathcal{I}^c\backslash(A_{x^*},\cdot)$ are constants, such that the behavior $\boldsymbol{F}_{\boldsymbol{l}}$ gives values of some terms $\sum_{i=2}^{2m-1}\alpha_{q_i}$ in the chain Bell inequality Eq.~\eqref{eq:chaininequality}, like Eq.~\eqref{eq:main16} in the manuscript, which is the key technical requirement in the proof of Result~\ref{result 3}.

Then, in Proposition~\ref{proposition 2}, we show that any no-signaling behavior $\boldsymbol{p}\in\mathcal{NS}(d,d,m,m)$ that deterministically produces outcome when $x = x^*$, e.g., $\{p^e(ab|xy)\}$ in Eq.~\eqref{eq:main12} in the manuscript, is fully determined by its projection $\mathrm{\Pi}_{\{A_1,\cdots,A_m\}\cup\{B_1,\cdots,B_m\}\backslash A_{x^*}} \boldsymbol{p} \eqqcolon \boldsymbol{\kappa}\in\mathcal{NS}(d,d,m-1,m)$. Based on the decomposition of $\boldsymbol{\kappa}$ found in Lemma~\ref{lemma 2}, we will prove that the behavior $\boldsymbol{p}$ can always be written as a convex combination of behaviors $\{\boldsymbol{P}_{\boldsymbol{q}}\in\mathbb{P}_{\boldsymbol{q}}^c \}_{\boldsymbol{q}}$, where $\boldsymbol{q} = (q_1,\boldsymbol{l})\in \mathbb{Z}_d^{2m-2}$. Here, a behavior $\boldsymbol{P}_{\boldsymbol{q}}$ gives a value of $\sum_{i=2}^{2m-1}\alpha_{q_{i}} + \alpha_{q_1} + \alpha_{q_{2m}}$ of the chain inequality Eq.~\eqref{eq:chaininequality}, where $\sum_{i = 1}^{2m} q_{i} = -1$. Thus, this partially deterministic behavior $\boldsymbol{p}$ cannot violate any chain inequality.

\begin{lemma} \label{lemma addition}
Let $\boldsymbol  \kappa \in \mathcal{NS} (d,d,m-1,m)$ be a no-signaling behavior, and let $\boldsymbol{l}\coloneqq \left(q_2,q_3,\cdots,q_{2m-1}\right) \in \mathbb{Z}^{2m-2}_d$ be a vector. A function induced by $\boldsymbol  \kappa$ is defined by 
\begin{equation}
    \begin{aligned}
    \mathcal{F}^{m}_{\boldsymbol l}(t) &\coloneqq \kappa_B(t+q_2|1)\prod_{k = 2}^m g_{\boldsymbol  l}^{k}(t - c_k),~~~\text{ with }~~~ g_{\boldsymbol  l}^{k}(t) &\coloneqq \frac{ \kappa(t,t+q_{2k-2}|k,k-1)\kappa(t ,t-q_{2k-1} |kk)}{\kappa_B\left(t + q_{2k-2}| k-1\right)\kappa_{A}\left( t | k\right)},
    \end{aligned}
\end{equation}
where $c_k \coloneqq \sum_{i = 3}^{2k-2}q_i$ with $c_2 \coloneqq 0$, and $g_{\boldsymbol{l}}^{k}(t) \coloneqq 0$ if any marginal probability in the denominator is zero. Here $\kappa_A(a|x) \coloneqq \sum_{b}\kappa(ab|xy)$ and $\kappa_B(b|y) \coloneqq \sum_{a}\kappa(ab|xy)$ are the marginal probabilities. Then, this function satisfies the following properties:
\begin{equation}
    \begin{aligned}
    \label{eq:mathcalF}
    \sum_{q_{2i},\cdots,q_{2m-1}\in[d]} \mathcal{F}^{m}_{\boldsymbol  l}(t) &= \mathcal{F}^{i}_{\boldsymbol  l}(t),~~\forall \boldsymbol  l, t, i\geq 2,~~~\sum_{q_{2},\cdots,q_{2i-3}\in[d]} \mathcal{F}^{i}_{\boldsymbol  l}(t + c_{i}) &= \kappa_B(t+q_{2i-2}|i-1) g_{\boldsymbol{l}}^i(t),~~~\forall \boldsymbol  l, t, i\geq 3.
    \end{aligned}
\end{equation}
\end{lemma}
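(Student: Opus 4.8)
\emph{Proof proposal.} The plan is to establish the two identities in \eqref{eq:mathcalF} by iterated marginalization, exploiting that each factor $g_{\boldsymbol l}^{k}$ is tailored so that its denominator $\kappa_B(s+q_{2k-2}|k-1)\,\kappa_A(s|k)$ is cancelled exactly by the marginals produced when one sums the numerator $\kappa$-factors of the neighbouring terms over a single $q$-variable. The only facts about $\boldsymbol\kappa\in\mathcal{NS}(d,d,m-1,m)$ I would use are: (i) $\sum_a\kappa(ab|xy)=\kappa_B(b|y)$ and $\sum_b\kappa(ab|xy)=\kappa_A(a|x)$, with these marginals independent of the other party's input; and (ii) for a fixed first index $s$, as $q$ runs over $[d]$ the index $s\pm q$ runs over all of $[d]$ modulo $d$ (and symmetrically). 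The two identities are essentially dual: the first peels the chain from the high-index end, the second from the low-index end.

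For the first identity I would induct on $m-i$, the inductive step being the one-step reduction $\sum_{q_{2m-2},q_{2m-1}\in[d]}\mathcal{F}^{m}_{\boldsymbol l}(t)=\mathcal{F}^{m-1}_{\boldsymbol l}(t)$. Since $\mathcal{F}^{m}_{\boldsymbol l}(t)=\mathcal{F}^{m-1}_{\boldsymbol l}(t)\,g^{m}_{\boldsymbol l}(t-c_m)$ and $\mathcal{F}^{m-1}_{\boldsymbol l}(t)$ depends only on $q_2,\dots,q_{2m-3}$, it suffices to show $\sum_{q_{2m-2},q_{2m-1}}g^{m}_{\boldsymbol l}(t-c_m)=1$. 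With $u:=t-c_m$ one checks that $u+q_{2m-2}$ is independent of $q_{2m-2}$ (the $q_{2m-2}$ carried by $c_m$ cancels it); hence summing first over $q_{2m-1}$ replaces $\kappa(u,u-q_{2m-1}|mm)$ by $\kappa_A(u|m)$, cancelling one denominator, and then summing over $q_{2m-2}$ replaces $\kappa(u,u+q_{2m-2}|m,m-1)$ by $\kappa_B(u+q_{2m-2}|m-1)$, cancelling the other, leaving $1$. Multiplying $\mathcal{F}^{m-1}_{\boldsymbol l}$ back in and iterating the step yields the identity for every $i\ge 2$.

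For the second identity, the $k=i$ factor of $\mathcal{F}^{i}_{\boldsymbol l}(t+c_i)$ is $g^{i}_{\boldsymbol l}(t)$ (since $t+c_i-c_i=t$), which involves only $q_{2i-2},q_{2i-1}$ and so pulls out of the sum over $q_2,\dots,q_{2i-3}$. One is left with $\sum_{q_2,\dots,q_{2i-3}}\kappa_B(t+c_i+q_2|1)\prod_{k=2}^{i-1}g^{k}_{\boldsymbol l}(t+c_i-c_k)$, which I would collapse from the front: summing over $q_2$ uses the explicit leading factor $\kappa_B(t+c_i+q_2|1)$ to cancel the $\kappa_B(\cdot|1)$-denominator of $g^2$ and uses $\sum_{q_2}\kappa(t+c_i,t+c_i+q_2|2,1)=\kappa_A(t+c_i|2)$ to cancel its $\kappa_A(\cdot|2)$-denominator, leaving the single factor $\kappa(t+c_i,t+c_i-q_3|22)$; summing over $q_3$ turns this into the marginal $\kappa_B(t+\sum_{j=4}^{2i-2}q_j|2)$, which is precisely the $\kappa_B(\cdot|2)$-denominator carried by $g^3$; and so on, with each pair $(q_{2k-2},q_{2k-1})$ consuming $g^{k}$ and emitting exactly the marginal that cancels into $g^{k+1}$. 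After the last pair $(q_{2i-4},q_{2i-3})$ has been summed, the surviving marginal is $\kappa_B(t+q_{2i-2}|i-1)$, and restoring the factored-out $g^{i}_{\boldsymbol l}(t)$ gives the claimed right-hand side.

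I expect the main obstacle to be bookkeeping rather than conceptual: one must track the shifts $c_k$ and all index arithmetic modulo $d$ so that at every stage the argument of the marginal produced by one summation coincides, on the nose, with the argument appearing in the next $g^{k}$, and so that the induction in the first part is invoked with a consistent labelling of the $q$'s. A second, more delicate point is the degenerate case where a denominator marginal vanishes: here I would not carry limits through the computation but argue directly that, by positivity, $\kappa_B(\cdot|k-1)=0$ (resp. $\kappa_A(\cdot|k)=0$) forces the corresponding numerator $\kappa(\cdot,\cdot|k,k-1)$ (resp. $\kappa(\cdot,\cdot|kk)$) to vanish, so that with the convention $g^{k}:=0$ the left-hand side already vanishes, while the right-hand side vanishes for the same reason since it carries the same marginal factor; a short case analysis then closes the argument.
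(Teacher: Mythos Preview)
Your proposal is correct and follows essentially the same route as the paper: both arguments peel the chain from the high-index end for the first identity and from the low-index end for the second, using the same telescoping cancellation between the marginals produced by summing one $q$-variable and the denominators of the adjacent $g^{k}$, and both handle vanishing marginals by the positivity observation that a zero marginal forces the corresponding joint probability in the numerator to vanish. The paper makes the degenerate-denominator bookkeeping slightly more explicit (introducing the sets $q_{i}^{>0}$ and noting, in the one-step reduction, that when $\kappa_B(t-c_{m-1}-q_{2m-3}\mid m-1)=0$ the factor $\mathcal{F}^{m-1}_{\boldsymbol l}(t)$ itself already vanishes through the numerator of $g^{m-1}$), which is exactly the ``short case analysis'' you allude to at the end.
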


\begin{proof}
Since every probability comprising the marginal must be zero if the marginal probability is zero, so we define $g_{\boldsymbol{l}}^{k}(t) \coloneqq 0$ if any marginal probability in the denominator is zero. 
Then, we will first prove the first property in Eq.~\eqref{eq:mathcalF}. Since $\sum_a\kappa_A(a|m) = \sum_b\kappa_B(b|m-1) = 1$, at least some values of $q_{2m-2},q_{2m-3}\in[d]$ such that the marginal probabilities in denominator of $g_{\boldsymbol{l}}^m(t - c_m)$, i.e., $\kappa_A(t-c_m| m)$ and $\kappa_B(t-c_m+q_{2m-2}| m-1)$, are larger than zero for any $t$ and $\boldsymbol l\backslash (q_{2m-3},q_{2m-2})$. We denote the sets of these values $q_{i}$ as $q_{i,t,\boldsymbol{l}\backslash (q_{2m-3},q_{2m-2})}^{>0}\subseteq [d]$ with $i = 2m-2, 2m-3$. For the sake of simplicity, we just write $q_{i}^{>0}$, and then we have
\begin{equation}
    \begin{aligned}
        \sum_{q_{2i},\cdots,q_{2m-1}\in[d]}\mathcal{F}_{\boldsymbol{l}}^m(t) &= \sum_{q_{2i},\cdots,q_{2m-3}\in[d]} \mathcal{F}_{\boldsymbol{l}}^{m-1}(t)\sum_{q_{2m-2}\in q_{2m-2}^{>0}}
        \frac{\kappa(t - c_m,t - c_m +q_{2m-2} |m,m-1)}{\kappa_B\left(t - c_m + q_{2m-2}| m-1\right)} \\ &= \sum_{q_{2i},\cdots,q_{2m-4}\in[d]} \sum_{q_{2m-3}\in q_{2m-3}^{>0}}\mathcal{F}_{\boldsymbol{l}}^{m-1}(t) \\
        &= \sum_{q_{2i},\cdots,q_{2m-3}\in[d]}\mathcal{F}_{\boldsymbol{l}}^{m-1}(t) , ~~~\forall t, \boldsymbol{l}, i\geq 2, m\geq 3,
    \end{aligned}
\end{equation}
where the second equality is because the values $q_{2m-2}\notin q_{2m-2}^{>0}$ give that $\kappa_A(t-c_m|m) = 0$, thereby $\kappa(t-c_m,t-c_m +q_{2m-2}|m,m-1) = 0$. So here the sum over $q_{2m-2}\in q^{>0}_{2m-2}$ is the same as summing over $q_{2m-2} \in[d]$. Similarly, the third equality is because $\mathcal{F}_{\boldsymbol{l}}^{m-1}(t) = 0$ for the values $q_{2m-3}\notin q_{2m-3}^{>0}$. Therefore, we have $\sum_{q_{2i},\cdots,q_{2m-1}\in[d]} \mathcal{F}^{m}_{\boldsymbol  l}(t) = \mathcal{F}^{i}_{\boldsymbol l}(t)$, $\forall t, \boldsymbol{l}, i\geq 2$. 

Similarly, for the second property in Eq.~\eqref{eq:mathcalF}, at least some values of $q_{2k-1},q_{2k-2}\in[d]$ such that the denominator of $g_{\boldsymbol{l}}^k(t + c_i - c_k)$ is larger than zero for any $t$ and $\boldsymbol l\backslash (q_{2k-2},q_{2k-1})$, we also denote them as $q_{i}^{>0}$. We have
\begin{equation}
    \begin{aligned}
        \sum_{q_{2},\cdots,q_{2i-3}\in[d]} \mathcal{F}^{i}_{\boldsymbol  l}(t + c_{i}) & = \sum_{q_{2},\cdots,q_{2i-3}\in[d]} \kappa_B(t+c_i + q_2|1) \prod_{k = 2}^{i} g_{\boldsymbol{l}}^k(t+c_i - c_k)\\
        & = \sum_{q_{4},\cdots,q_{2i-3}\in[d]} \left( \sum_{q_3\in q_3^{>0}}\kappa(t+c_i ,t+c_i-q_{3} |22) \right) \prod_{k = 3}^{i}g_{\boldsymbol  l}^{k}(t +c_i - c_k) \\
        & = \sum_{q_{4},\cdots,q_{2i-3}\in[d]} \kappa_B(t+c_i + q_2 -(q_2 + q_{3})|2)\prod_{k = 3}^{i}g_{\boldsymbol  l}^{k}(t +c_i - c_k),~~~\forall t, \boldsymbol{l}, i\geq 3.
    \end{aligned}
\end{equation}
Therefore, we have $\sum_{q_{2},\cdots,q_{2i-3}\in[d]} \mathcal{F}^{i}_{\boldsymbol l}(t + c_{i}) = \kappa_B(t+q_{2i-2}|i-1) g_{\boldsymbol{l}}^i(t)$, which concludes this proof.
\end{proof}

In Lemma~\ref{lemma 2}, we will use the properties proved in Lemma~\ref{lemma addition} and show that the behavior $\boldsymbol{\kappa}\in\mathcal{NS}(d,d,m-1,m)$ can be written as a convex combination of a set of behaviors $\{\boldsymbol{F}_{\boldsymbol{l}}\}_{\boldsymbol{l}}$ satisfying Eq.~\eqref{eq:chainconstraintP}. Here $\{\boldsymbol{F}_{\boldsymbol{l}}\}_{\boldsymbol{l}}$ are defined by $\mathcal{F}_{\boldsymbol{l}}^m(t)$.

\begin{lemma}\label{lemma 2}
Given a no-signaling behavior $\boldsymbol  \kappa \in \mathcal{NS} (d,d,m-1,m)$, and let $\boldsymbol{l}\coloneqq \left(q_2,q_3,\cdots,q_{2m-1}\right) \in \mathbb{Z}^{2m-2}_d$, $\boldsymbol{q} \coloneqq ( q_1 ,\boldsymbol{l}) \in \mathbb{Z}_d^{2m-1}$ be vectors, then there exist some behaviors $\boldsymbol F_{\boldsymbol {l}}\in\mathbb{P}_{\boldsymbol{q}}^c\subset \mathcal{P}(d,d,m,m)$, in which the probabilities associated with the measurement settings $\mathcal{I}^c\backslash \left(A_{x^*},\cdot\right)$ are defined by $\boldsymbol{\kappa}$ based on Lemma~\ref{lemma addition}:
\begin{equation}
    \begin{aligned}
    \label{eq:chain_construct}
        F_{\boldsymbol  l}\left( t,t - q_{2i-1}|ii\right) = F_{\boldsymbol  l}\left( t,t+ q_{2i-2}|i,i-1\right) \coloneqq \mathcal{F}^{m}_{\boldsymbol  l}(t + c_i) /f(\boldsymbol  l) ,~~~\forall \boldsymbol  l, t, i\in \left\{2,\cdots,m\right\},
    \end{aligned}
\end{equation}
where $f(\boldsymbol  l) \coloneqq \sum_{t\in[d]} \mathcal{F}^{m}_{\boldsymbol  l}(t)$ and $\sum_{\boldsymbol  l} f(\boldsymbol  l) = 1$ with $f(\boldsymbol  l)\geq 0$. 
Then, the behavior $\boldsymbol{\kappa}$ can be decomposed as
\begin{equation}
    \begin{aligned}
    \label{eq:kappadecomposition}
        \mathrm{\Pi}_{\mathcal{I}^c\backslash \left(A_{x^*},\cdot\right)}{\boldsymbol{\kappa}} = \mathrm{\Pi}_{\mathcal{I}^c\backslash \left(A_{x^*},\cdot\right)}\sum_{\boldsymbol l}f(\boldsymbol  l) \boldsymbol{F}_{\boldsymbol  l} ,
    \end{aligned}
\end{equation}
where $\sum_{\boldsymbol  l}$ means $\sum_{q_2,\cdots,q_{2m-1}\in [d]}$. 
\end{lemma}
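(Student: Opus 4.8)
\textbf{Proof proposal for Lemma~\ref{lemma 2}.}

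The plan is to verify directly that the candidate objects $\boldsymbol{F}_{\boldsymbol l}$ defined by \eqref{eq:chain_construct} are genuine normalized probability distributions lying in $\mathbb{P}_{\boldsymbol q}^c$, and then to check that the stated convex combination reproduces $\boldsymbol\kappa$ on the relevant input pairs. First I would observe that \eqref{eq:chain_construct} only fixes the entries of $\boldsymbol F_{\boldsymbol l}$ on the input pairs in $\mathcal{I}^c\backslash(A_{x^*},\cdot)$; for all remaining input pairs one is free to complete $\boldsymbol F_{\boldsymbol l}$ to a full element of $\mathcal{P}(d,d,m,m)$ consistent with the product/uniform structure (e.g. taking product of the prescribed marginals), so the only real content is (i) nonnegativity and correct normalization of the prescribed block, (ii) that the prescribed block indeed satisfies the constant-difference constraints \eqref{eq:chainconstraintP} for the pairs it covers, and (iii) the decomposition identity \eqref{eq:kappadecomposition}.

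For (ii), note that \eqref{eq:chain_construct} assigns a value to $F_{\boldsymbol l}(t,t-q_{2i-1}|ii)$ and to $F_{\boldsymbol l}(t,t+q_{2i-2}|i,i-1)$ only along the ``diagonal'' $b=a-q_{2i-1}$ (resp. $b=a+q_{2i-2}$), and zero off it; this is exactly the statement that $A_i-B_i\equiv q_{2i-1}$ and $B_i-A_{i+1}\equiv q_{2i-2}$ with probability one, i.e. $\boldsymbol F_{\boldsymbol l}\in\mathbb{P}_{\boldsymbol q}^c$ once we also fix $q_1$ and set $q_{2m}\coloneqq-\|\boldsymbol q\|_1-1$ as in Definition~\ref{definition1}. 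For (i) and (iii) the engine is Lemma~\ref{lemma addition}: summing $\mathcal{F}^m_{\boldsymbol l}(t+c_i)$ over $t\in[d]$ and over the free components of $\boldsymbol l$ collapses, by the two properties in \eqref{eq:mathcalF}, to marginals of $\boldsymbol\kappa$. Concretely, $f(\boldsymbol l)=\sum_t \mathcal{F}^m_{\boldsymbol l}(t)\ge 0$ because every factor in $\mathcal{F}^m_{\boldsymbol l}$ is a ratio of (nonnegative) joint probabilities of $\boldsymbol\kappa$ over nonnegative marginals, with the convention that the whole expression vanishes when a denominator vanishes; and $\sum_{\boldsymbol l}f(\boldsymbol l)=1$ follows from iterating the first identity of \eqref{eq:mathcalF} down to $i=2$, where $\sum_t\mathcal{F}^2_{\boldsymbol l}(t)=\sum_t\kappa_B(t+q_2|1)g^2_{\boldsymbol l}(t)$ telescopes using $\sum_a\kappa_A(a|2)=1$ and the no-signaling marginals to give $1$ after summing the remaining $q$'s. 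The same bookkeeping, applied with the sum over $t$ left free and with $i$ fixed, shows that $\sum_{\boldsymbol l}f(\boldsymbol l)F_{\boldsymbol l}(t,t-q_{2i-1}|ii)=\sum_{q_{2i-1}}\big(\text{reconstructed joint}\big)$ matches $\kappa(t,t-q_{2i-1}|ii)$ for every fixed $q_{2i-1}$, and likewise for the $(i,i-1)$ pairs; the key cancellations are precisely the two formulas in \eqref{eq:mathcalF}, which were designed so that summing out $q_{2i},\dots,q_{2m-1}$ and then $q_2,\dots,q_{2i-3}$ peels $\mathcal{F}^m_{\boldsymbol l}$ back to the single factor $\kappa_B(t+q_{2i-2}|i-1)g^i_{\boldsymbol l}(t)$, whose numerator is exactly the desired pair of joint probabilities of $\boldsymbol\kappa$.

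The main obstacle I anticipate is the third point, verifying \eqref{eq:kappadecomposition} entry by entry: one must show that after restricting to a fixed input pair $(A_x,B_y)\in\mathcal{I}^c\backslash(A_{x^*},\cdot)$ and a fixed output difference, the sum over all remaining components of $\boldsymbol l$ of $f(\boldsymbol l)F_{\boldsymbol l}$ telescopes cleanly to the corresponding entry of $\boldsymbol\kappa$, and in particular that the off-diagonal entries (where $\boldsymbol F_{\boldsymbol l}$ is zero) are consistent with $\boldsymbol\kappa$ being itself supported, on these pairs, by the constant-difference structure only after averaging — i.e. one is not claiming $\boldsymbol\kappa$ is deterministic, only that $\mathrm{\Pi}_{\mathcal{I}^c\backslash(A_{x^*},\cdot)}\boldsymbol\kappa$ lies in the convex hull of the $\mathrm{\Pi}_{\mathcal{I}^c\backslash(A_{x^*},\cdot)}\boldsymbol F_{\boldsymbol l}$. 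This is where the precise placement of the shifts $c_k=\sum_{i=3}^{2k-2}q_i$ and the argument offsets $t\pm q$ in \eqref{eq:chain_construct} matters, and I would handle it by induction on $m$, using Lemma~\ref{lemma addition} as the inductive step to reduce the $m$-input reconstruction to the $(m-1)$-input one, with the base case $m=3$ checked by hand.
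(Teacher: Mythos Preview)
Your proposal is correct and follows essentially the same route as the paper: verify that the diagonal-supported $\boldsymbol F_{\boldsymbol l}$ are normalized members of $\mathbb{P}_{\boldsymbol q}^c$, prove $\sum_{\boldsymbol l}f(\boldsymbol l)=1$ by collapsing $\mathcal{F}^m_{\boldsymbol l}$ down to $\mathcal{F}^2_{\boldsymbol l}$ via the first identity in \eqref{eq:mathcalF}, and then check \eqref{eq:kappadecomposition} entry-by-entry by summing out $q_{2i},\dots,q_{2m-1}$ (first identity) followed by $q_2,\dots,q_{2i-3}$ (second identity) to reduce to $\kappa_B(t+q_{2i-2}|i-1)g^i_{\boldsymbol l}(t)$, whose numerator gives the desired $\kappa$-entry. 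The paper carries this out by direct computation rather than induction on $m$; your telescoping description in the second paragraph is already exactly that direct argument, so the induction you propose at the end is unnecessary overhead---the two identities of Lemma~\ref{lemma addition} do all the work in one pass for each fixed $i$.
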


\begin{proof}
Firstly, we note that $f(\boldsymbol l)= 0$ indicates $\mathcal{F}^{m}_{\boldsymbol l} (t) = 0$ for any $t$, then we can simply define the corresponding distribution as a normalized distribution without loss of generality. 
Then, for each $\boldsymbol{l}\in\mathbb{Z}_d^{2m-2},$ we will prove that a behavior $\boldsymbol{F}_{\boldsymbol{l}}$ with probability distributions defined by Eq.~\eqref{eq:chain_construct} belongs to  $\mathbb{P}_{\boldsymbol{q}}^c$. For the input pairs in the set $\mathcal{I}^c\backslash(A_{x^*},\cdot)$, each distribution $\{p(ab|xy)\}_{ab}$ in $\mathbb{P}_{\boldsymbol{q}}^c$ has at most $d$ nonzero probabilities, e.g., $p(a,b \neq a-q_{2i-1}|ii) = 0$, thereby Eq.~\eqref{eq:chain_construct} is enough to describe the distributions associated with the input pairs in $\mathcal{I}^c\backslash(A_{x^*},\cdot)$. 
Thus, $F_{\boldsymbol{l}}(t,b\neq t-q_{2i-1}|ii) \coloneqq 0$, and the distributions in Eq.~\eqref{eq:chain_construct} are normalized: $\sum_{ab}F_{\boldsymbol{l}}(ab|ii) = \sum_{t\in[d]}F_{\boldsymbol{l}}(t,t-q_{2i-1}|ii) = 1$ for any $\boldsymbol{l}$, and likewise for $F_{\boldsymbol{l}}(t,t+q_{2i-2}|i,i-1)$. 
For other input pairs, since $\boldsymbol{F}_{\boldsymbol{l}}$ does not necessarily be no-signaling, one can simply construct normalized probability distributions such that $\boldsymbol{F}_{\boldsymbol{l}}\in \mathbb{P}_{\boldsymbol{q}}^c \subset \mathcal{P}(d,d,m,m)$, and $\boldsymbol{F}_{\boldsymbol{l}}$ has the same projection as in Eq.~\eqref{eq:chain_construct}. 

Additionally, Based on Lemma~\ref{lemma addition}, we have $\mathcal{F}_{\boldsymbol{l}}^m(t),f(\boldsymbol{l})\geq 0$, and 
\begin{equation}
    \begin{aligned}
    \label{eq:contraction}
       \sum_{\boldsymbol  l} f(\boldsymbol  l) = \sum_{t, \boldsymbol  l} \mathcal{F}^{m}_{\boldsymbol  l}(t) = \sum_{t,q_2,q_3\in[d]} \mathcal{F}_{\boldsymbol  l}^2(t) = \sum_{t,q_2,q_3\in[d]} \frac{\kappa(t ,t+q_{2} |21)\kappa(t ,t-q_{3}|22)}{\kappa_A\left(t\mid2\right) } = 1.
    \end{aligned}
\end{equation}
%\jt{Can you expand a bit more on that claim?}

Then, we will show that the vector $\mathrm{\Pi}_{\mathcal{I}^{c}\backslash (A_{x^*},\cdot)}\boldsymbol{\kappa}$ can be written as a convex combination of the set of vectors $\{ \mathrm{\Pi}_{\mathcal{I}^{c}\backslash (A_{x^*},\cdot)} \boldsymbol  F_{\boldsymbol  l} \}_{\boldsymbol{l}}$. For the input pair $(A_{i},B_{i-1})$ and a given $q'_{2i-2}\in[d]$, since $\boldsymbol{F}_{\boldsymbol{l}}\in\mathbb{P}_{\boldsymbol{q}}^c$, the entries satisfy $F_{\boldsymbol{l}}(t,t+q'_{2i-2}|i,i-1) \geq 0$ only for the vectors $\boldsymbol{l}$ that give $q_{2i-2} =  q'_{2i-2}$, otherwise these entries must be zero for any $t$. Thus, for these vectors $\boldsymbol{l}$, according to Lemma~\ref{lemma addition}, we have 
\begin{equation}
    \begin{aligned}
        \sum_{\boldsymbol  l}f(\boldsymbol  l) F_{\boldsymbol  l} \left(t,t+q'_{2i-2}|i,i-1\right) &= \sum_{\boldsymbol  l \backslash q_{2i-2} \in \mathbb{Z}_{d}^{2m-3}} \mathcal{F}^{m}_{\boldsymbol  l}(t+c_{i})= \sum_{q_{2},\cdots,q_{2i-3},q_{2i-1}\in [d]}\mathcal{F}^{i}_{\boldsymbol  l}(t+c_{i}) = \kappa(t ,t+q'_{2i-2} |i,i-1) ,~~\forall t, i\geq 3, q'_{2i-2}\in[d].
    \end{aligned}
\end{equation}
For $i = 2$, the last equality is $\sum_{q_3}\mathcal{F}^{2}_{\boldsymbol  l}(t) = \kappa(t ,t+q'_{2} |21)$.
Similarly, for the input pair $(A_i, B_i)$ and a given $q'_{2i-1}\in [d]$, we also have
\begin{equation}
    \begin{aligned}
        \sum_{\boldsymbol  l} f(\boldsymbol  l) F_{\boldsymbol  l} \left(t,t-q'_{2i-1}|ii\right) = \sum_{\boldsymbol l \backslash q_{2i-1} \in \mathbb{Z}_{d}^{2m-3}} \mathcal{F}_{\boldsymbol{l}}^m(t+c_i)&= \sum_{q_2,\cdots,q_{2i-2}\in [d]} \mathcal{F}^{i}_{\boldsymbol  l}(t+c_{i}) = \kappa(t,t-q'_{2i-1}|ii),~~\forall t, i\geq 3, q'_{2i-1}\in[d].
    \end{aligned}
\end{equation}
For $i = 2$, we have $\sum_{q_2}\mathcal{F}^{2}_{\boldsymbol  l}(t) = \kappa(t,t-q'_3|22)$, which concludes the proof.
\end{proof}

\begin{definition}
    For a given integer $e\in[d]$, we define a subset of no-signaling behaviors $\boldsymbol{p}\in\mathcal{NS}(d,d,m,m)$ as $\mathbb{D}_{A_{x^*} = e}\subset \mathcal{NS}(d,d,m,m)$, in which the behaviors have deterministic marginal probabilities $p_A(a|x^*) = \delta_{a,e}$ on the measurement ${x^*}$. Here ${\rm Conv}\left( \cup_{e}\mathbb{D}_{A_{x^*} = e}\right)$ is also referred to a partially deterministic polytope~\cite{PhDthesis_2014_Eirk}. % \jt{Here the D set has not been defined yet, but I think this one is very clear from context}
\end{definition}

\begin{proposition}\label{proposition 2}
For a given integer $e\in[d]$, any behavior $\boldsymbol{p}\in \mathbb{D}_{A_{x^*} = e} \subset \mathcal{NS}(d,d,m,m)$ admits a decomposition of
\begin{equation}
\label{eq:formofdecomposition}
        \boldsymbol{p} = \sum_{\boldsymbol  q} p(\boldsymbol  q) \boldsymbol{P}_{\boldsymbol{q}},
\end{equation}
where $\boldsymbol  q \coloneqq (q_1,q_2,\cdots,q_{2m-1})^{T}\in\mathbb{Z}_d^{2m-1}$. Here $p(\boldsymbol  q)\geq 0$, $\sum_{\boldsymbol  q }p(\boldsymbol  q) = 1$, $\sum_{\boldsymbol  q}$ means $\sum_{q_1,\cdots,q_{2m-1}\in [d]}$, and $\boldsymbol{P}_{\boldsymbol  q} \in \mathbb{P}_{\boldsymbol{q}}^c\subset \mathcal{P}(d,d,m,m)$. Thus, based on Lemma~\ref{lemma addition2}, no behavior in the partially deterministic polytope ${\rm Conv } \left( \cup_e \mathbb{D}_{A_{x^*} = e} \right) \subset {\rm Conv } \left(\cup_{\boldsymbol{q}}\mathbb{P}_{\boldsymbol{q}}^c \right)$ can violate any chain inequality Eq.~\eqref{eq:chain_manuscript}.
\end{proposition}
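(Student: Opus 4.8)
\textit{Proof proposal.} The plan is to reduce the claim to Lemma~\ref{lemma 2} by amputating the deterministic input $A_{x^*}$ and then gluing it back on. Fix $x^*=1$ without loss of generality (relabelling inputs leaves the chain factor invariant). First I would note that a behavior $\boldsymbol{p}\in\mathbb{D}_{A_1=e}$ is completely determined by its restriction $\boldsymbol{\kappa}\in\mathcal{NS}(d,d,m-1,m)$ obtained by deleting Alice's input $1$: no-signaling forces $p(ab|1y)=\delta_{a,e}\,\kappa_B(b|y)$ for every $y$, while $p(ab|xy)=\kappa(ab|xy)$ for $x\neq1$, and $\boldsymbol{\kappa}$ inherits all no-signaling constraints. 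Lemma~\ref{lemma 2} then supplies $\mathrm{\Pi}_{\mathcal{I}^c\backslash(A_1,\cdot)}\boldsymbol{\kappa}=\mathrm{\Pi}_{\mathcal{I}^c\backslash(A_1,\cdot)}\sum_{\boldsymbol{l}}f(\boldsymbol{l})\boldsymbol{F}_{\boldsymbol{l}}$. I would refine this one step further: restricted to $\mathcal{I}^c\backslash(A_1,\cdot)$, the behavior $\boldsymbol{F}_{\boldsymbol{l}}$ is the convex mixture $\sum_{s\in[d]}(\mathcal{F}^{m}_{\boldsymbol{l}}(s)/f(\boldsymbol{l}))\,\boldsymbol{G}_{\boldsymbol{l},s}$ of behaviors $\boldsymbol{G}_{\boldsymbol{l},s}$ in which $A_2=s$ is fixed, so that the whole chain becomes deterministic, $A_i=s-c_i$ and $B_i=s-c_i-q_{2i-1}$; since $\mathbb{P}^c_{\boldsymbol{q}}$ is a polytope these $\boldsymbol{G}_{\boldsymbol{l},s}$ still lie in it (degenerate $\boldsymbol{l}$ with $f(\boldsymbol{l})=0$ contribute nothing and can be assigned any normalized behavior).

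Next I would reattach input $1$. For each pair $(\boldsymbol{l},s)$ set $q_1\coloneqq e-s-q_2$ (with $q_2$ the first entry of $\boldsymbol{l}$) and $\boldsymbol{q}\coloneqq(q_1,\boldsymbol{l})\in\mathbb{Z}_d^{2m-1}$, and extend $\boldsymbol{G}_{\boldsymbol{l},s}$ to a full behavior $\boldsymbol{P}_{\boldsymbol{q}}\in\mathcal{P}(d,d,m,m)$ by declaring $A_1=e$ on the two input pairs $(A_1,B_1)$ and $(A_1,B_m)$ of $\mathcal{I}^c$, and by copying $\boldsymbol{p}$ verbatim on every input pair outside $\mathcal{I}^c$. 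A short consistency check --- using $B_1=s+q_2$, $B_m=s-c_m-q_{2m-1}$, $A_{m+1}=A_1+1$, and $q_{2m}=-\|\boldsymbol{q}\|_1-1$ --- shows $A_1-B_1=q_1$ and $B_m-A_{m+1}=q_{2m}$, hence $\boldsymbol{P}_{\boldsymbol{q}}\in\mathbb{P}^c_{\boldsymbol{q}}$; crucially $\mathbb{P}^c_{\boldsymbol{q}}\subset\mathcal{P}$ rather than $\mathcal{NS}$, which is exactly what licenses choosing the off-chain coordinates freely. The map $(\boldsymbol{l},s)\mapsto\boldsymbol{q}$ is a bijection onto $\mathbb{Z}_d^{2m-1}$, so I can set $p(\boldsymbol{q})\coloneqq\mathcal{F}^{m}_{\boldsymbol{l}}(s)\geq0$ (positivity from Lemma~\ref{lemma addition}), with $\sum_{\boldsymbol{q}}p(\boldsymbol{q})=\sum_{\boldsymbol{l}}\sum_{s}\mathcal{F}^{m}_{\boldsymbol{l}}(s)=\sum_{\boldsymbol{l}}f(\boldsymbol{l})=1$ by Eq.~\eqref{eq:contraction}.

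It remains to verify $\boldsymbol{p}=\sum_{\boldsymbol{q}}p(\boldsymbol{q})\boldsymbol{P}_{\boldsymbol{q}}$ coordinatewise. Outside $\mathcal{I}^c$ this is immediate because the weights sum to $1$ and each $\boldsymbol{P}_{\boldsymbol{q}}$ agrees with $\boldsymbol{p}$ there. On $\mathcal{I}^c\backslash(A_1,\cdot)$ it follows by reading the refined decomposition backwards through Lemma~\ref{lemma 2}, using that $\boldsymbol{p}$ and $\boldsymbol{\kappa}$ coincide on those inputs. The only genuine computation is on the two pairs containing $A_1$: there $\sum_{\boldsymbol{q}}p(\boldsymbol{q})\boldsymbol{P}_{\boldsymbol{q}}(ab|11)=\delta_{a,e}\sum_{\boldsymbol{l}}\mathcal{F}^{m}_{\boldsymbol{l}}(b-q_2)$, which has to be recognised as $\delta_{a,e}\kappa_B(b|1)$; this drops out after collapsing $q_3,\dots,q_{2m-1}$ with the first identity of Lemma~\ref{lemma addition} down to $\mathcal{F}^{2}_{\boldsymbol{l}}$, summing the remaining chain variable, and invoking the no-signaling marginal $\sum_a\kappa(ab|21)=\kappa_B(b|1)$. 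The analogous identity --- using that $e-\|\boldsymbol{q}\|_1$ is independent of $q_2$ after the substitution $t=e-q_1-q_2$ --- handles the pair $(A_1,B_m)$. Then $\mathbb{D}_{A_1=e}\subset{\rm Conv}(\cup_{\boldsymbol{q}}\mathbb{P}^c_{\boldsymbol{q}})$ for every $e$, hence so is the partially deterministic polytope, and Lemma~\ref{lemma addition2} closes the argument; combined with the decomposition Eq.~\eqref{eq:main12} this yields Result~\ref{result 3}.

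I expect the main obstacle to be precisely this last matching on the two $A_{x^*}$-pairs: keeping the modular bookkeeping straight (the substitution $t=e-q_1-q_2$, the bijection $(\boldsymbol{l},s)\leftrightarrow\boldsymbol{q}$, the $q_2$-independence of $e-\|\boldsymbol{q}\|_1$) and applying the two marginalization identities of Lemma~\ref{lemma addition} in the correct order, while tracking the degenerate cases where a marginal vanishes, which are absorbed into the convention $g_{\boldsymbol{l}}^{k}\equiv0$. Everything else is structural.
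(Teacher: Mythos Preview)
Your proposal is correct and follows essentially the same route as the paper: restrict $\boldsymbol{p}$ to $\boldsymbol{\kappa}\in\mathcal{NS}(d,d,m-1,m)$, invoke Lemma~\ref{lemma 2}, define the weights via $\mathcal{F}^{m}_{\boldsymbol{l}}$, glue the deterministic $A_{x^*}$-rows back on, and verify the two $(A_{x^*},\cdot)$ pairs using the marginalization identities of Lemma~\ref{lemma addition}. The weights you obtain, $p(\boldsymbol{q})=\mathcal{F}^{m}_{\boldsymbol{l}}(s)$ with $s=e-q_1-q_2$, are literally the paper's $p(q_1|\boldsymbol{l})f(\boldsymbol{l})=F_{\boldsymbol{l}}(e-q_1-q_2,e-q_1|21)f(\boldsymbol{l})$ rewritten, and your checks on $(A_1,B_1)$ and $(A_1,B_m)$ are the same computations as the paper's Eqs.~\eqref{eq:decompositionx1} and \eqref{eq:B20}.

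The one genuine difference is cosmetic but worth noting: the paper sets $\mathrm{\Pi}_{\mathcal{I}^c\backslash(A_{x^*},\cdot)}\boldsymbol{P}_{\boldsymbol{q}}\coloneqq\mathrm{\Pi}_{\mathcal{I}^c\backslash(A_{x^*},\cdot)}\boldsymbol{F}_{\boldsymbol{l}}$, i.e.\ all $\boldsymbol{q}$ sharing the same $\boldsymbol{l}$ are assigned the \emph{same} (non-deterministic) behavior on the non-$A_{x^*}$ chain pairs, whereas you refine one step further and take $\boldsymbol{P}_{\boldsymbol{q}}=\boldsymbol{G}_{\boldsymbol{l},s}$, fully deterministic on all of $\mathcal{I}^c$. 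Both choices land in $\mathbb{P}^{c}_{\boldsymbol{q}}$ and both reproduce $\boldsymbol{p}$ after averaging; your version makes the chain-determinism manifest at the level of each summand, at the cost of one extra decomposition step, while the paper's version skips that step and relies on Lemma~\ref{lemma 2} directly for the non-$A_{x^*}$ pairs.
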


\begin{proof}
Firstly, for a given behavior $\boldsymbol{p}\in\mathbb{D}_{A_{x^*} = e}$, we define $\boldsymbol \kappa \coloneqq \mathrm{\Pi}_{\{A_1,\cdots,A_m\}\cup\{B_1,\cdots,B_m\}\backslash A_{x^*}} \boldsymbol{p} \in \mathcal{NS}(d,d,m-1,m)$. Then we have $p(ab|x^*y) = \delta_{a,e}\kappa_B(b|y)$, and the vector $\boldsymbol  \kappa\in \mathbb{R}^{t-md^2}$ is enough to determine $\boldsymbol{p}\in\mathbb{R}^{t}$, where $t = m^2d^2$. 
Then, we will construct the decomposition Eq.~\eqref{eq:formofdecomposition} based on Lemma~\ref{lemma 2}. Here we define $p(\boldsymbol  q)$ and $\boldsymbol{P}_{\boldsymbol{q}}$ based on $\boldsymbol \kappa$, which are
\begin{equation}
    p(\boldsymbol  q) \coloneqq p(q_1|\boldsymbol  l)f(\boldsymbol  l) ~~~\text{with}~~~p(q_1|\boldsymbol  l) \coloneqq F_{\boldsymbol  l}(e-q_1-q_2,e-q_1|21), ~~~\forall \boldsymbol{q} = (q_1,\boldsymbol{l})\in\mathbb{Z}_d^{2m-1},
\end{equation}
such that $\sum_{q_1\in[d]}p(q_1|\boldsymbol{l}) = 1$ for any $\boldsymbol{l} \in\mathbb{Z}_d^{2m-2}$. One can easily check that $p(\boldsymbol  q)\geq 0$ and $\sum_{\boldsymbol  q}p(\boldsymbol  q) = 1$. For $\mathrm{\Pi}_{\mathcal{I}^c}\boldsymbol{P}_{\boldsymbol{q}}$, we define them by 
\begin{equation}
\label{eq:B16}
    \mathrm{\Pi}_{\mathcal{I}^c\backslash \left(A_{x^*},\cdot\right)} \boldsymbol{P}_{\boldsymbol{q}} \coloneqq \mathrm{\Pi}_{\mathcal{I}^c\backslash \left(A_{x^*},\cdot\right)}\boldsymbol{F}_{\boldsymbol{l}},~~~P_{\boldsymbol{q}}(ab|x^*x^*) \coloneqq \delta_{a,e}\delta_{b,e-q_1}, ~~~P_{\boldsymbol{q}}(ab|x^*m) \coloneqq \delta_{a,e}\delta_{b,e-\| \boldsymbol{q}\|_1},~~~\forall  \boldsymbol{q} = (q_1,\boldsymbol{l})\in \mathbb{Z}_d^{2m-1}.
\end{equation}
Then there always exists $\boldsymbol{P}_{\boldsymbol{q}}\in \mathbb{P}_{\boldsymbol{q}}^c$ with the same projection as in Eq.~\eqref{eq:B16}. 

Next, we will prove that Eq.~\eqref{eq:formofdecomposition} holds for $p(\boldsymbol  q)$ and $\boldsymbol{P}_{\boldsymbol{q}}$ as defined above. For the input pairs $\mathcal{I}^c\backslash(A_{x^*},\cdot)$, we have
\begin{equation}
    \begin{aligned}
        \mathrm{\Pi}_{\mathcal{I}^c\backslash(A_{x^*},\cdot)}\boldsymbol{p} = \mathrm{\Pi}_{\mathcal{I}^c\backslash(A_{x^*},\cdot)}\boldsymbol{\kappa} = \mathrm{\Pi}_{\mathcal{I}^c\backslash \left(A_{x^*},\cdot\right)}\sum_{\boldsymbol l}f(\boldsymbol  l) \boldsymbol{F}_{\boldsymbol  l} = \mathrm{\Pi}_{\mathcal{I}^c\backslash \left(A_{x^*},\cdot\right)}\sum_{\boldsymbol q}{p}(\boldsymbol q) \boldsymbol{P}_{\boldsymbol q}.
    \end{aligned}
\end{equation}

For the input pair $(A_{x^*},B_{x^*})$, we have
\begin{equation}
    \begin{aligned}
    \label{eq:decompositionx1}
        \sum_{\boldsymbol  q} p(\boldsymbol  q)P_{\boldsymbol  q} (ab|x^*x^*) &= \sum_{q_1\in [d]}\sum_{\boldsymbol  l} p(q_1|\boldsymbol  l)f(\boldsymbol  l)\delta_{a,e}\delta_{b,e-q_1} = \delta_{a,e}\sum_{\boldsymbol  l} F_{\boldsymbol l}(b-q_2,b|21)f(\boldsymbol l) = \delta_{a,e}\sum_{\boldsymbol{l}}\sum_{a'\in [d]} \delta_{a',b-q_2}F_{\boldsymbol l}(a'b|21)f(\boldsymbol l),~~~\forall a,b.
    \end{aligned}
\end{equation}
Since $F_{\boldsymbol l}(a'b|21) = 0$ if $a' \neq b - q_2$, then Eq.~\eqref{eq:decompositionx1} is equal to $ \delta_{a,e}\kappa_B(b|x^*) = p(ab|x^*x^*)$. Thus, Eq.~\eqref{eq:formofdecomposition} also holds for $(A_{x^*},B_{x^*})$:
\begin{equation}
    \mathrm{\Pi}_{(A_{x^*},B_{x^*})}\boldsymbol{p} = \mathrm{\Pi}_{(A_{x^*},B_{x^*})}\sum_{\boldsymbol q}{p}(\boldsymbol q) \boldsymbol{P}_{\boldsymbol q}.
\end{equation}

Finally, for the input pair $(A_{x^*},B_{m})$, we have
\begin{equation}
\label{eq:B20}
        \sum_{\boldsymbol  q} p(\boldsymbol  q)P_{\boldsymbol  q} (ab|x^*m) = \delta_{a,e}\sum_{q_1 \in [d]}\sum_{\boldsymbol  l} p(q_1|\boldsymbol  l)f(\boldsymbol  l)\delta_{b,e-\|\boldsymbol  q\|_1} = \delta_{a,e}\sum_{\boldsymbol  l} f(\boldsymbol  l) F_{\boldsymbol  l}(b+\|\boldsymbol  l\|_1 - q_2,b + \|\boldsymbol  l\|_1 |21).
\end{equation}
Meanwhile, based on Lemma~\ref{lemma 2}, we have
\begin{equation}
    \begin{aligned}
        \kappa_B(b|m) & = \sum_{a'}\kappa(a'b|mm) = \sum_{a'}\sum_{\boldsymbol{l}}f(\boldsymbol{l})F_{\boldsymbol{l}}(a'b|mm) = \sum_{\boldsymbol{l}}f(\boldsymbol{l})F_{\boldsymbol{l}}(b+q_{2m-1},b|mm) = \sum_{\boldsymbol{l}}\mathcal{F}_{\boldsymbol{l}}^m(b+q_{2m-1}+c_m). 
    \end{aligned}
\end{equation}
Since $\mathcal{F}_{\boldsymbol{l}}^m(b+q_{2m-1}+c_m) = \mathcal{F}_{\boldsymbol{l}}^m\left((b+\|\boldsymbol{l}\|_1 - q_2) + c_2\right) $, we have 
\begin{equation}
    \kappa_B(b|m) = \sum_{\boldsymbol  l} f(\boldsymbol  l) F_{\boldsymbol  l}(b+\|\boldsymbol  l\|_1 - q_2,b + \|\boldsymbol  l\|_1 |21).
\end{equation}
Then Eq.~\eqref{eq:B20} is equal to $\delta_{a,e}\kappa_B(b|m)=p(ab|x^*m)$, thereby
\begin{equation}
    \mathrm{\Pi}_{(A_{x^*},B_{m})}\boldsymbol{p} = \mathrm{\Pi}_{(A_{x^*},B_{m})}\sum_{\boldsymbol q}{p}(\boldsymbol q) \boldsymbol{P}_{\boldsymbol q}.
\end{equation}
Thus, we have 
\begin{equation}
    \mathrm{\Pi}_{\mathcal{I}^{c}}        \boldsymbol{p} = \mathrm{\Pi}_{\mathcal{I}^{c}}\sum_{\boldsymbol  q} p(\boldsymbol  q) \boldsymbol{P}_{\boldsymbol{q}}.
\end{equation}
Based on the Definition~\ref{definition1}, we can easily obtain a convex decomposition in the form of Eq.~\eqref{eq:formofdecomposition}, which concludes the proof. 
\end{proof}

\setcounter{equation}{0}

\renewcommand\theequation{C\arabic{equation}}
\section{Proof of Result 4}
\label{appC}
In Result~\ref{result 4} of the manuscript, we consider cases that take input distribution into account and provide a necessary and sufficient condition for randomness certification. Firstly, we change the target function of Eq.~\eqref{eq:definition_steer_rand} in the manuscript to
\begin{equation}
    \begin{aligned}
    \label{eq:C1}
        P^{\rm S}_{\text {guess }} \left(X\right) & = \max_{\left\{\sigma_{a \mid x}^{\gamma}\right\}} \sum_{x\in X} p(x) \sum_{\gamma}\operatorname{Tr}\left(\sigma_{a=\gamma(x) \mid x}^\gamma\right),
    \end{aligned}
\end{equation}
where $X\subseteq \{1,2,\cdots,m\}$, $p(x)>0$ denotes the relative frequencies of inputs, and $\sum_{x\in X}p(x) = 1$. Here Eve's guess for different measurements is a string $\gamma = (a_{x=x_1},\cdots,a_{x = x_s})$, where $\gamma(\cdot)$ being a function from $X$ to $[d]$. 

Consequently, we demonstrate that given an assemblage $\{\sigma_{a|x}^{\rm obs}\}$, the guessing probability satisfies $P^{\rm S}_{\text{guess}} \left(X\right) < 1$ iff it lies outside the set $\mathcal{R}_{X}^{\rm S}\coloneqq\left\{\sigma_{a|x}: \sum_a\sigma_{a|x} = \sum_a \sigma_{a|x'},~\forall x,x',~~\mathcal{C}_{A} \supseteq K_{s,m-s} (X)\right\}$ for any $X\subseteq\{1,\cdots,m\}$. The structure $K_{s,m-s} (X)$ implies the set of measurements $X$, when combined with every other measurement $x\notin X$, are compatible (e.g., Fig.~\ref{fig:structure}~(c) in the manuscript).
The detailed proof, generalized from Appendix~\ref{appA}, is provided below.

\begin{proof}
For the ``only if'' part, if $\mathcal{C}_A \supseteq K_{s,m-s}\left(X\right)$, there exists a parent POVM of the set of measurements $X$, denoted as $\mathcal{X}$ with elements $M_{\boldsymbol  a| \mathcal{X}}$, where $\boldsymbol  a \coloneqq (a_{x_1},a_{x_2},\cdots,a_{x_{s}} ) \in \mathbb{Z}_d^{s}$. This parent POVM has $d^s$ outcomes, such that it reduces to a single measurement $x \in X$ by classical post-processing~\cite{rmp_2023_otfried}. As an example, %\jt{grammar: there are many "there exist" that should be "there exists" if it refers to a singular noun. Here this is such an example}
\begin{equation}
    M_{a|x} = \sum_{x'\in X\backslash x}\sum_{a_{x'}\in[d]}M_{\boldsymbol  a = (a_{x_1},\cdots,a_x = a,\cdots,a_{x_s})| \mathcal{X}}, \quad \forall x\in X.
\end{equation}
The compatibility structure of measurement $\mathcal{X}$ together with the other measurements $x\notin X$ is described by the star graph $K_{1,m-s}(\mathcal{X})$. Therefore, for a given bipartite state $\rho_{AB}$, the corresponding ``$K_{1,m-s}(\mathcal{X})$-partially-unsteerable'' assemblage can be treated as coming from $m - s + 1$ measurements: 
\begin{equation}
    \begin{aligned}
    \label{eq:C3}
    \sigma_{a|x}^{\text{obs}} &= \sum_{x'\in X\backslash x}\sum_{a_{x'}\in[d]}  \theta_{\boldsymbol  a = (a_{x_1},\cdots,a_x = a,\cdots,a_{x_s})|\mathcal{X}}^{\text{obs}}\quad \text{with} \quad \theta_{\boldsymbol  a|\mathcal{X}}^{\text{obs}} = \sum_{\mu}D_{\text{Local}}(\boldsymbol  a|\mathcal{X},\mu)\tau_\mu^{i}, \quad \forall a, x\in X, e_i\in E\left(K_{1,m-s}(\mathcal{X})\right),\\
    \sigma_{a|x}^{\text{obs}} &= \sum_{\mu}D_{\text{Local}}(a|x,\mu)\tau_\mu^{i},\quad \forall a,x\notin X, \jt{\quad} (\mathcal{X},x)= e_i\in E\left(K_{1,m-s}(\mathcal{X})\right),
    \end{aligned}
\end{equation}
where $\tau_\mu^i = \text{Tr}_A\left[\left(\mathcal{G}_{\mu}^i\otimes \mathbb{I}^B\right)\rho_{AB}\right]$ are local hidden states, and $e_i$ is the edge of star graph $E\left(K_{1,m-s}(\mathcal{X})\right)$. Still, for the sake of simplicity, we will use $e_i$ to denote the edge of $E\left(K_{1,m-s}(\mathcal{X})\right)$ in the following. Here $\left\{\mathcal{G}_{\mu}^i\right\}_{\mu}$ is the parent POVM of measurements $\{ \mathcal{X}, x\}$, where $(\mathcal{X},x) = e_i$. We note that $\theta_{\boldsymbol  a|\mathcal{X}}^{\text{obs}} = \text{Tr}_{A}\left[ \left( M_{\boldsymbol  a|\mathcal{X}} \otimes \mathbb{I}^B\right)\rho_{AB}  \right] $ in Eq.~\eqref{eq:C3} are the same for all $i\in\{1,\cdots,m-s\}$. Similar to Appendix~\ref{appA}, we can directly construct an optimal solution such that $P_{g}^{\rm S}(X) = 1$ in Eq.~\eqref{eq:C1}:
\begin{equation}
\begin{aligned}
\label{eq:C4}
    \xi_{a|x}^{\gamma} &\coloneqq \sum_{x'\in X\backslash x}\sum_{a_{x'}\in[d]}\theta_{\boldsymbol  a = (a_{x_1},\cdots,a_x = a, \cdots, a_{x_s})|\mathcal{X}}^{\gamma},\quad \text{with} \quad \theta_{\boldsymbol  a|\mathcal{X}}^\gamma \coloneqq \sum_{\mu\in \Lambda^{(\gamma)}}D_{\text{Local}}(\boldsymbol  a|\mathcal{X},\mu)\tau_\mu^{i}, \quad  \forall a, x\in X, \gamma,\\
    \xi_{a|x}^{\gamma} &\coloneqq \sum_{\mu\in\Lambda^{(\gamma)}}D_{\text{Local}}(a|x,\mu) \tau_\mu^i, \quad \forall a, x\notin X, (\mathcal{X},x) = e_i,\gamma,
\end{aligned}
\end{equation}
where $\Lambda^{(\gamma)} \coloneqq \{\mu|\mu(\mathcal{X}) = \gamma \}$. %$\Lambda^{(\gamma)} \coloneqq \{\mu|\mu(x_1) = \gamma(x_1),\cdots,\mu(x_s) = \gamma(x_s) \}$. 
This construction satisfies
\begin{equation}
   \theta_{\boldsymbol  a|\mathcal{X}}^{\gamma = \boldsymbol  a}  = \sum_{\gamma} \theta_{\boldsymbol  a|\mathcal{X}}^{\gamma} = \theta_{\boldsymbol  a|\mathcal{X}}^{\text{obs}}, ~~\text{and}~~ {\rm Tr}\left( \theta_{\boldsymbol a|\mathcal{X}}^{\gamma \neq \boldsymbol{a}} \right) = 0, ~~~\forall \boldsymbol{a},
\end{equation}
Since $\theta_{\boldsymbol{a}|\mathcal{X}}^{\rm obs}$ are independent of $i$, the constructions $\theta_{\boldsymbol a | \mathcal{X}}^\gamma$ and $\xi_{a|x\in X}^\gamma$ are also independent of $i$. 
For $a \neq \gamma(x)$, we have $\theta_{\boldsymbol  a|\mathcal{X}}^{\gamma} = 0$ for any $\boldsymbol{a}\backslash a_x$ and $\xi_{a|x}^{\gamma} = 0$. With this construction, Eve can always guess the outcomes of measurement $x\in X$ correctly:
\begin{equation}
    \begin{aligned}
    \sum_\gamma\text{Tr}\left(\xi_{a=\gamma(x)|x}^\gamma\right) = \sum_\gamma\text{Tr}\left(\theta_{\boldsymbol  a = \gamma|\mathcal{X}}^\gamma\right) =  \sum_\gamma\text{Tr}\left(\theta_{\boldsymbol  a = \gamma|\mathcal{X}}^{\text{obs}}\right) = 1, \quad \forall x\in X,
    \end{aligned}
\end{equation}
which gives $P_{g}^{\rm S}(X) = 1$ in Eq.~\eqref{eq:generalization}. 
Finally, one can easily check that this construction aligns with the observed assemblage. Based on Eq.~\eqref{eq:C4}, it also satisfies the no-signaling conditions
\begin{equation}
    \begin{aligned}
         \sum_a \xi_{a|x}^\gamma &= \sum_{\boldsymbol  a} \theta_{\boldsymbol  a|\mathcal{X}}^\gamma = \theta_{\boldsymbol  a = \gamma|\mathcal{X}}^{\gamma} =\sum_{\mu \in \Lambda^{(\gamma)}} \tau_{\mu}^i = \sum_a\xi_{a|x'}^{\gamma}, \quad \forall \gamma,  x'\notin X, x\in X, (\mathcal{X},x
        ') = e_i,\\
         \sum_a \xi_{a|x}^\gamma &= \sum_{\boldsymbol  a} \theta_{\boldsymbol  a|\mathcal{X}}^\gamma =  \sum_a \xi_{a|x'}^\gamma, \quad \forall \gamma,\quad x,x'\in X. 
    \end{aligned}
\end{equation}
 Therefore, when $\mathcal{C}_A \supseteq K_{s,m-s}\left(X\right)$, we have $P_{g}^{\rm S}(X) = 1$ in Eq.~\eqref{eq:generalization} in the manuscript. 

For the ``if'' part, $P_{g}^{\rm S}(X) = 1$ only when there exists a solution $\left\{ \xi_{a|x}^\gamma \right\}$ such that it satisfies all the constraints and
\begin{equation}
    \begin{aligned}
        \sum_\gamma \text{Tr}\left( \xi_{a = \gamma(x)|x}^\gamma\right) = 1, \quad \forall x\in X.
    \end{aligned}
\end{equation}
This indicates that 
\begin{equation}
\label{eq:C8}
    \text{Tr}\left( \xi_{a |x}^{\gamma} \right) = 0, \quad \forall \gamma,  a \neq  \gamma(x),x \in X,
\end{equation}
 as $\sum_{a,\gamma} \text{Tr}\left( \xi_{a |x}^{\gamma} \right) = 1$ and $\text{Tr}\left( \xi_{a |x}^{\gamma} \right)\geq 0$. 
For each $\gamma$, the no-signaling condition ensures that the ensemble admits quantum realization. Thus, there exist a state $\rho_{AB}^\gamma$ and POVMs $\left\{\mathcal{M}_{a|x}^\gamma\right\}_{a,x}$ such that
\begin{equation}
    \begin{aligned}
        \xi_{a|x}^\gamma = p(\gamma)\text{Tr}_A\left[\left(\mathcal{M}_{a|x}^\gamma \otimes \mathbb{I}^B\right) \rho_{AB}^\gamma\right], \quad \forall a,\gamma, x\in\left\{1,\cdots,m\right\},
    \end{aligned}
\end{equation}
where $p(\gamma) = \sum_a \text{Tr}\left( \xi_{a|x}^\gamma \right)$. 
Consider that Eve gives a guess $\gamma$ by implementing one measurement on her side. The ensemble can then be explained by
\begin{equation}
    \begin{aligned}
        \xi_{a|x}^\gamma = \text{Tr}_{E\boldsymbol{A}}\left[\left( \mathcal{M}_{\gamma} ^E \otimes \mathcal{M}_{a|x} ^{\boldsymbol{A}} \otimes \mathbb{I}^B\right) \rho_{E\boldsymbol{A}B}\right],  \quad \forall a,\gamma, x\in\left\{1,\cdots,m\right\}.
    \end{aligned}
\end{equation}
where $\rho_{E\boldsymbol{A}B} = \sum_\gamma p(\gamma) |\gamma \gamma \rangle _{EA'} \langle \gamma \gamma | \otimes \rho_{AB}^\gamma$, $\mathcal{M}_{\gamma}^E = |\gamma\rangle _{E} \langle \gamma|$, and $\mathcal{M}_{a|x}^A = \sum_{\gamma}|\gamma\rangle _{A'} \langle \gamma| \otimes \mathcal{M}_{a|x}^\gamma$. Here $|\gamma\rangle = |\gamma(x_1),\cdots,\gamma(x_s)\rangle$ are orthogonal states. Based on Eq.~\eqref{eq:C8}, we have
\begin{equation}
    \begin{aligned}
        \sigma_{a|x}^{\text{obs}} &= \sum_{\gamma }\xi_{a |x}^\gamma = \sum_{\gamma \in \left\{ \gamma: \gamma(x) = a\right\}}\xi_{a|x}^\gamma = \sum_{\gamma \in \left\{ \gamma: \gamma(x) = a\right\}} \sum_{a'}\xi_{a'|x}^\gamma = \text{Tr}_{EA}\left[\left( | a\rangle_{E_x} \langle a| \otimes \mathbb{I}^{\boldsymbol{A}}\otimes \mathbb{I}^B\right) \rho_{E\boldsymbol{A}B}\right], \forall x\in X, \gamma,
        \\
        \sigma_{a|x}^{\text{obs}} &= \sum_{\gamma }\xi_{a |x}^\gamma = \text{Tr}_{E\boldsymbol{A}}\left[\left( \mathbb{I}^E \otimes \mathcal{M}_{a|x}^A\otimes \mathbb{I}^B\right) \rho_{E\boldsymbol{A}B}\right],~~~\forall a,\gamma, x\notin X.
    \end{aligned}
\end{equation}
Therefore, by treating $E\boldsymbol{A}|B$ as bipartition, when the subsystem $E\boldsymbol{A}$ receives input $x\in X$, Eve performs measurement $\left\{|a\rangle _{E_x}\langle a|\right\}_a$. When the subsystem $E\boldsymbol{A}$ receives input $x\notin X$, Alice performs measurements $\left\{\mathcal{M}_{a|x}^{\boldsymbol{A}}\right\}$, thereby the compatibility structure of the subsystem $E\boldsymbol{A}$ can be explained by the hypergraph $K_{s,m-s}(X)$, which concludes the proof.
\end{proof}

% \jt{Maybe the last part can be incorporated as a Remark}
\begin{remark}
    Given an assemblage $\{\sigma_{a|x}^{\rm obs}\}$, when $s = m$, the guessing probability satisfies $P_g^{\rm S}(X) = 1$ iff $\{\sigma_{a|x}^{\rm obs}\}$ is unsteerable.
\end{remark}
\begin{proof}
The ``if'' direction is straightforward. 
For the ``only if'' direction, if $P_g^{\rm S}(X) = 1$ in Eq.~\eqref{eq:C1} in the manuscript and $s = m$ such that $X = \left\{1,2,\cdots,m\right\}$, based on Eq.~\eqref{eq:C8}, for each $\gamma$, we have 
\begin{equation}
    \begin{aligned}
        \sum_a \xi_{a|x}^\gamma =  \sum_a \xi_{a|x'}^\gamma ~~ \Rightarrow ~~ \xi_{a = \gamma(x)|x}^\gamma =  \xi_{a = \gamma(x')|x'}^\gamma , \quad \forall x, x'\in X, \gamma,
    \end{aligned}
\end{equation}
as the solution $\left\{\xi_{a|x}^\gamma\right\}_{a,x}$ satisfies the no-signaling conditions. 
We can simply consider an ensemble that is independent of the input and output
\begin{equation}
    \zeta^\gamma = \xi_{a = \gamma(x_1)|x_1}^\gamma  = \xi_{a = \gamma(x_2)|x_2}^\gamma = \cdots = \xi_{a = \gamma(x_m)|x_m}^\gamma, \quad \forall \gamma,
\end{equation}
where $\sum_\gamma\text{Tr}\left( \zeta^\gamma \right) = \sum_\gamma\text{Tr}\left( \xi_{a = \gamma(x_1)|x_1}^\gamma\right) = \sum_\gamma\sum_{a'}\text{Tr}\left( \xi_{a'|x_1}^\gamma\right) = 1$. 
Thus, the observed assemblage can be written as 
\begin{equation}
    \begin{aligned}
        \sigma_{a|x}^{\text{obs}} = \sum_\gamma \xi_{a|x}^\gamma = \sum_{\gamma}D_{\text{Local}}(a|x,\gamma) \zeta^\gamma, \quad \forall a,x\in \left\{1,\cdots,m\right\}.
    \end{aligned}
\end{equation}
Therefore, the observed assemblage admits the LHS model, and $\{\zeta^{\gamma}\}$ are the hidden ststaes.
\end{proof}

\setcounter{figure}{0}
\renewcommand\thefigure{D\arabic{figure}}
\section{Steering weight and steering-based randomness}
\label{appD}
The steering weight was proposed to quantify the steering~\cite{prl_2014_paul_sw}. For a given assemblage $\sigma_{a|x}^{\rm obs}$, it can always be decomposed as a convex combination of an LHS assemblage $\sigma_{a|x}^{\rm LHS}$ and a generic no-signaling assemblage $\gamma_{a|x}$, i.e.,
\begin{equation}
\label{eq:D1}
    \sigma_{a|x}^{\rm obs} = p\gamma_{a|x} + (1-p)\sigma_{a|x}^{\rm LHS}, ~~~\forall a,x,
\end{equation}
where $p\in [0,1]$.
Then, the steering weight is the minimum of $p$ over all possible decompositions of Eq.~\eqref{eq:D1}:
\begin{equation}
\begin{aligned}
\label{eq:D2}
\operatorname{SW}\left(\sigma_{a \mid x}^{\rm obs}\right)= & \min _{\left\{\gamma_{a \mid x}\right\},\left\{\sigma_\lambda\right\}, p}  p, \\
\text { s.t. } & \sigma_{a \mid x}^{\rm obs}=p \gamma_{a \mid x}+(1-p) \sigma_{a \mid x}^{\mathrm{LHS}}, \quad \sigma_{a \mid x}^{\mathrm{LHS}}=\sum_\lambda D(a | x, \lambda) \sigma_\lambda, \quad \forall a, x, \\
& \sum_a \gamma_{a \mid x}=\sum_a \gamma_{a \mid x^{\prime}}, \quad \operatorname{Tr} \sum_a \gamma_{a \mid x}=1,  \quad \forall x, x^{\prime}, \\
& \operatorname{Tr} \sum_\lambda \sigma_\lambda=1, \quad \gamma_{a \mid x}, \sigma_\lambda \succeq 0, \quad \forall a, x,\lambda,
\end{aligned}
\end{equation}
which can be computed via an SDP~\cite{rpp_2016_paul}. 
For instance, when the given assemblage is extreme but very close to the unsteerable set, this will give a very small steering robustness but a maximum steering weight, e.g., by performing Pauli measurements on the partially entangled states $|\Psi_{\theta}\rangle$ shown in the manuscript, the resulting assemblage gives maximum steering weight and certifies maximum steering-based randomness for any $\theta\in(0,\pi/4]$. 
Thus, it is natural to discuss the relation between the amount of steering-based randomness and steering weight. For a given assemblage $\sigma_{a|x}^{\rm obs}$, the exact value of guessing probability $P_g^{\rm S}(x^*)$ can be given.
Based on the optimal solution found in Eq.~\eqref{eq:D2}, we can also provide a lower bound of the guessing probability $P^{\rm S}_g(x^*)$:
\begin{equation}
\label{eq:C3}
    P^{\rm S}_g(x^*) \geq \operatorname{SW}\left(\sigma_{a \mid x}^{\rm obs}\right) \max_a {\rm Tr} \left( \gamma_{a|x^*} \right) + 1 - \operatorname{SW}\left(\sigma_{a \mid x}^{\rm obs}\right).
\end{equation}
This means ${\rm SW}(\sigma_{a|x}^{\rm obs}) = 0$ gives $P_g^{\rm S}(x^*) = 1$. Since not all steerable assemblages can certify nonzero randomness, we can provide a tighter bound by considering the partially unsteerable assemblages. Thus, we define the partially steering weight as 
\begin{equation}
\begin{aligned}
\label{eq:D2}
\operatorname{PSW}\left(\sigma_{a \mid x}^{\rm obs}\right)= & \min _{\left\{\gamma_{a \mid x}\right\},\left\{\sigma_\lambda^i\right\}, p}  p, \\
\text { s.t. } & \sigma_{a \mid x}^{\rm obs}=p \gamma_{a \mid x}+(1-p) \sigma_{a \mid x}^{\mathrm{PLHS}}, \quad \forall a, x,\\
& \sigma_{a \mid x}^{\mathrm{PLHS}} =\sum_\lambda D(a | x, \lambda) \sigma_\lambda^i, \quad \forall a, \quad (x^*,x) = e_i \in E\left( K_{1,m-1}(x^*) \right), \\
& \sum_a \gamma_{a \mid x}=\sum_a \gamma_{a \mid x^{\prime}}, \quad \operatorname{Tr} \sum_a \gamma_{a \mid x}=1, \quad \forall x, x^{\prime}, \\
&\operatorname{Tr} \sum_\lambda \sigma_\lambda^i=1, \quad \sigma_\lambda^i \succeq 0, \quad \forall \lambda,i, \quad \gamma_{a \mid x} \succeq 0, \quad \forall a, x.
\end{aligned}
\end{equation}
Similarly, one can provide a lower bound of the guessing probability $P^{\rm S}_g(x^*)$:
\begin{equation}
\label{eq:C5}
    P^{\rm S}_g(x^*) \geq \operatorname{PSW}\left(\sigma_{a \mid x}^{\rm obs}\right) \max_a {\rm Tr} \left( \gamma_{a|x^*} \right) + 1 - \operatorname{PSW}\left(\sigma_{a \mid x}^{\rm obs}\right).
\end{equation}
For instance, consider the states $\rho_{AB}^{p,\theta}$ discussed in the manuscript. When Alice performs three Pauli measurements $\{\hat{X},\hat{Y},\hat{Z}\}$, the resulting guessing probability and its lower bounds given by Eq.~\eqref{eq:C3} and Eq.~\eqref{eq:C5} are shown in Fig.~\ref{fig:D1}.
\begin{figure}[h]
\centering
% \vspace{-0.cm}
\includegraphics[width=0.5\textwidth]{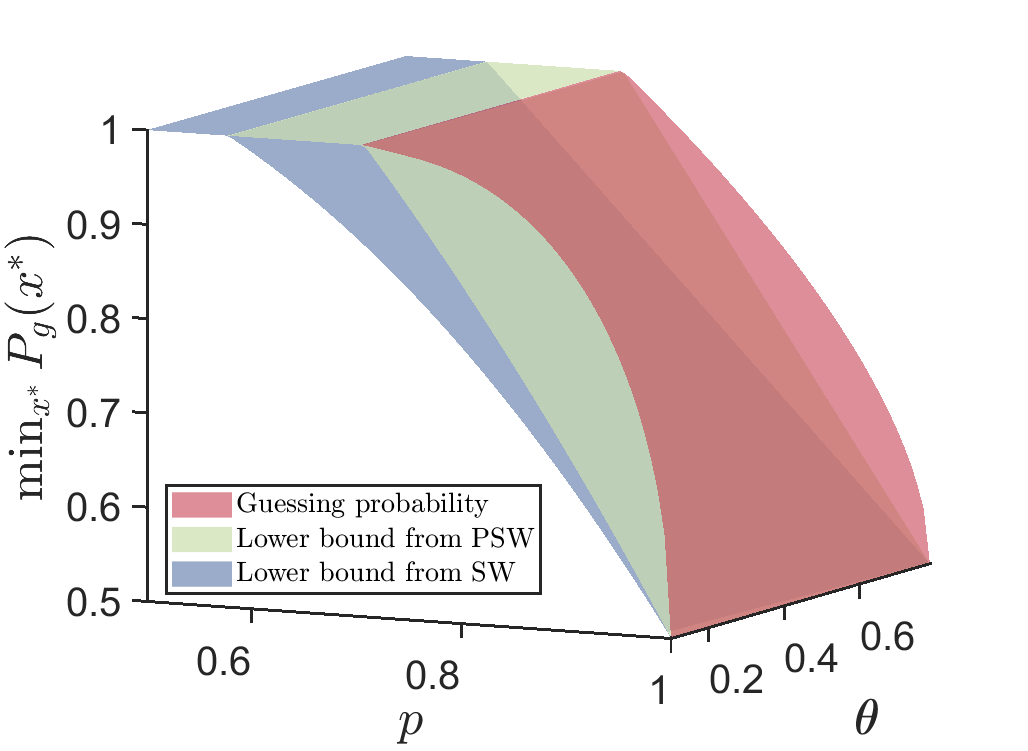}
\caption{\label{fig:D1} The guessing probability and its lower bounds certified from the state $\rho_{AB}^{p,\theta}$. The red surface represents the exact value of $\min_{x^*}P_g^{\rm S}(x^*)$. The green and blue surfaces represent the lower bounds of $P_g^{\rm S}(x = {\rm argmin}_{x^*} P_g^{\rm S}(x^*))$ given by Eq.~\eqref{eq:C5} and Eq.~\eqref{eq:C3} respectively. For the same $\{\sigma_{a|x}^{\rm obs}\}$, the lower bound given by Eq.~\eqref{eq:C5} is tighter than the lower bound given by Eq.~\eqref{eq:C3}. Further, the thresholds for nonzero certifiable randomness on the green surface are the same as that of the red surface, which is consistent with Result~\ref{result 1}. }
\end{figure}

\end{appendix}

\end{document}